\newcommand{\N}{\mathbb{N}}
\newcommand{\R}{\mathbb{R}}
\newcommand{\E}{\mathbb{E}}
\renewcommand{\P}{\mathbb{P}}
\newcommand{\Deltam}{\Delta_{m-1}}
\newcommand{\Deltamn}{\add{\Omega_{m,n}}}
\newcommand{\barDeltamn}{\add{\bar\Omega_{m,n}}}
\DeclareMathOperator*{\argmin}{arg\,min}
\DeclareMathOperator*{\argmax}{arg\,max}
\newtheorem{Thm}{Theorem}
\newtheorem{Prop}[Thm]{Proposition}
\newtheorem{Lemma}[Thm]{Lemma}
\newtheorem{Cor}[Thm]{Corollary}
\newcommand{\add}[1]{
	#1}
\begin{document}

\title{\bf A Simple Algorithm for Exact Multinomial Tests}
\author{Johannes Resin\thanks{This work has been supported by the Klaus Tschira Foundation.
  		The author \add{would like} to thank Tilmann Gneiting, Alexander I.\ Jordan and Sebastian Lerch for helpful comments, discussions and continued encouragement \add{as well as two anonymous reviewers for their constructive comments}.}\hspace{.2cm}\\
    	Heidelberg Institute for Theoretical Studies
    	\\ Karlsruhe Institute of Technology
	}
\maketitle

\begin{abstract}
	This work proposes a new method for computing acceptance regions of exact multinomial tests. From this an algorithm is derived, which finds exact $p$-values for tests of simple multinomial hypotheses. Using concepts from discrete convex analysis, the method is proven to be exact for various popular test statistics, including Pearson's chi-square and the log-likelihood ratio. The proposed algorithm improves greatly on the naive approach using full enumeration of the sample space. However, its use is limited to multinomial distributions with a small number of categories, as the runtime grows exponentially in the number of possible outcomes.

	The method is applied in a simulation study, and uses of multinomial tests in forecast evaluation are outlined. Additionally, properties of a test statistic using probability ordering, referred to as the ``exact multinomial test'' by some authors, are investigated and discussed. The algorithm is implemented in the accompanying R package \texttt{ExactMultinom}.
	
	\medskip
	\noindent
	{\it Keywords:} Acceptance regions; goodness-of-fit test; log-likelihood ratio; Pearson's chi-square; probability mass statistic; R software
\end{abstract}

\section{Introduction}

Multinomial goodness-of-fit tests feature prominently in the statistical literature and a wide range of applications. Tests relying on asymptotics have been available for a long time and have been rigorously studied all through the 20\textsuperscript{th} century. The use of various test statistics has been investigated with Pearson's chi-square and the log-likelihood ratio statistic being vital examples. These statistics are members of the general family of power divergence statistics \citep{CR84}. With the widespread availability of computing power, Monte Carlo simulations and exact methods have also gained popularity.

\citet{TH73} and \citet{KG80} used the ``exact multinomial test'', which orders samples by probability\add{,} to assess the accuracy of asymptotic tests \add{of a simple null hypothesis against an unspecified alternative}. In the words of \citet{CR89}, this ``has provided much confusion and contention in the literature''. In accordance with \citet{GP75} and \citet{RA75}, they conclude that the asymptotic fit of a test should be assessed using the appropriate exact test based on the test statistic in question. Nevertheless, the exact multinomial test is intuitively appealing, and, as \citet{KG80} put it, ``[i]n the absence of [...] a specific alternative, it is reasonable to assume that outcomes with smaller probabilities under the null hypothesis offer a stronger evidence for its rejection and should belong to the critical region''. In Section \ref{Sec:2}, an asymptotic chi-square approximation to the exact multinomial test is derived, and an exemplary comparison of popular test statistics in terms of power is provided.

Regardless of the test statistic used, \add{computing} an exact $p$-value by fully enumerating the sample space is computationally challenging, as the test statistic and the probability mass function have to be evaluated at every possible sample of which there are $\binom{n+m-1}{m-1} = \mathcal{O}(n^{m-1})$ for samples of size $n$ with $m$ categories. An improvement on this method has been proposed by \citet{BFT04} \add{for the family of power divergence statistics. Other} approaches \add{aimed at exact Pearson's chi-square and log-likelihood ratio tests} exist \add{\citep[see for example][]{BOP92,Hir97,Rah03,KN06}}. In this work, a new approach to exact multinomial tests is investigated.

\begin{figure}\centering
	\includegraphics[scale = 0.5]{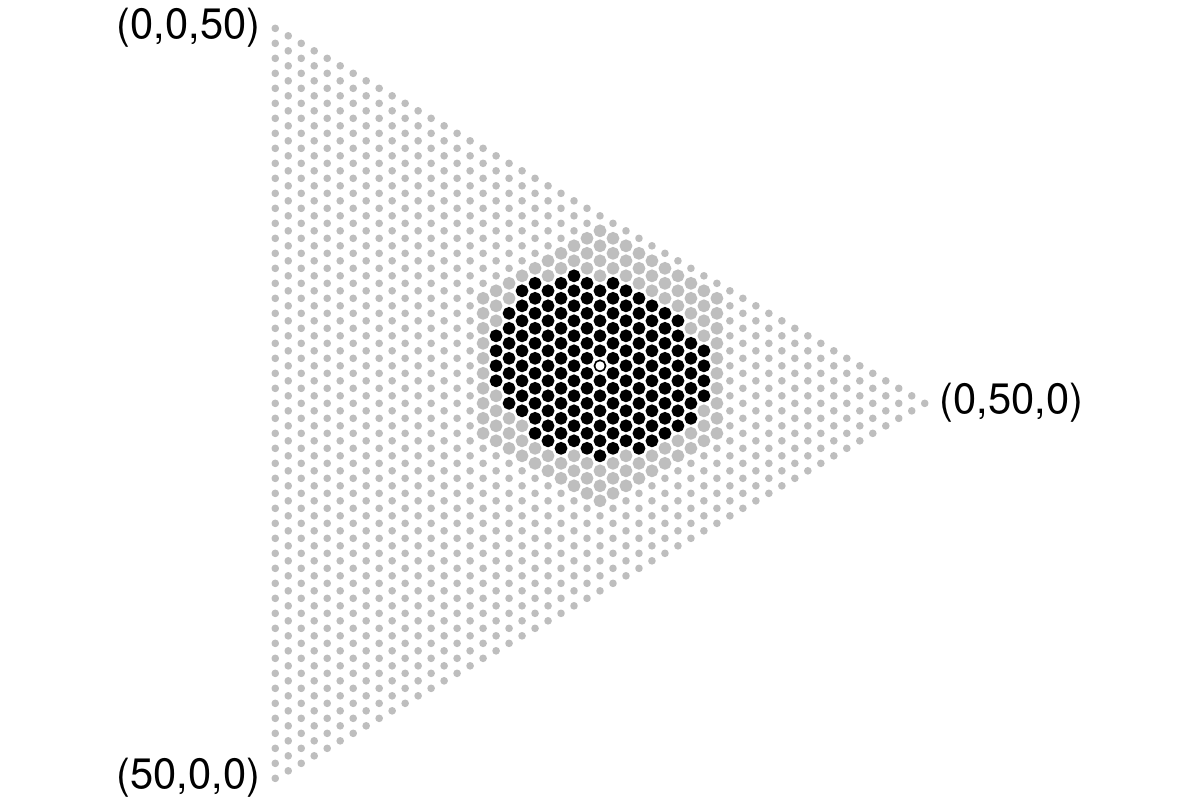}
	\caption{An acceptance region (\add{black dots}) at level $\alpha = 0.05$ for the null $\pi = (\frac{2}{10},\frac{5}{10},\frac{3}{10})$ and samples of size $n = 50$ with $m = 3$ categories. Only \add{points} within the ball (\add{big dots}) around the expectation (\add{hollow dot}) have to be considered to find this \add{region}.
	}
	\label{Fig:AReg}
\end{figure}

The key observation underlying the proposed algorithm is that acceptance regions at arbitrary levels contain relatively few points, which are located in a neighborhood of the expected value under the null hypothesis as illustrated in Figure \ref{Fig:AReg}\add{, and a}n acceptance region can be found by iteratively evaluating points within a ball of increasing radius around the expected value (w.r.t.\ the Manhattan distance). \add{The algorithm utilizes this by computing an exact $p$-value} from the probability mass of the smallest acceptance region that does not contain \add{the} observation. If $p$-values below an arbitrary threshold are not \add{computed} exactly, the runtime of the algorithm is guaranteed to be asymptotically faster than the approach using full enumeration as the diameter of any acceptance region essentially grows at a rate proportional to the square root of the sample size. This is detailed and proven to work for various popular test statistics in Section \ref{Sec:3}.

Furthermore, the algorithm is illustrated to work well in applications detailed in Section \ref{Sec:4}. In particular, the algorithm's runtime is compared to the full enumeration method in a simulation study, and the resulting $p$-values are used to assess the fit of asymptotic chi-square approximations and investigate differences between several test statistics. \add{As an application in forecast evaluation, the use of multinomial tests for uncertainty quantification}
within the so-called calibration simplex \citep{Wil13} is outlined and justified. 

The R programming language \citep{R} has been used for all \add{computations} throughout this work. An implementation of the proposed method is provided within the R package \texttt{ExactMultinom} \add{\citep{EM}}.

\section{A Brief Review on Testing a Simple Multinomial Hypothesis}\label{Sec:2}

Consider a multinomial experiment $X = (X_1,\dots,X_m)$ summarizing $n \in \N$ i.i.d.\ trials with $m \in \N$ possible outcomes. Let 
$$\Deltam := \{p\in[0,1]^m \mid p_1 + \ldots + p_m = 1\}$$
denote the \emph{unit $(m-1)$-simplex} or \emph{probability simplex} and 
$$\Deltamn = \{x \in \N_0^m \mid x_1 + \ldots + x_m = n\}$$
the \add{sample space, which is a} \emph{regular discrete $(m-1)$-simplex}. The distribution of $X$ is characterized by a parameter $p = (p_1,\dots,p_m) \in \Deltam$ encoding the occurrence probabilities of the outcomes on any trial, or $X \sim \mathcal{M}_m(n,p)$ for short. The multinomial distribution $\mathcal{M}_m(n,p)$ is fully described by the probability mass function (pmf)
$$f_{n,p} \colon \Deltamn \rightarrow [0,1],x\mapsto n! \prod_{j = 1}^m \frac{p_j^{x_j}}{x_j!}.$$

Suppose that the true parameter $p$ is unknown. Consider the simple null hypothesis $p = \pi$ for some $\pi \in \Deltam$. The agreement of a realization $x\in\Deltamn$ of $X$ with the null hypothesis is typically quantified by means of a test statistic $T\colon \Deltamn\times\Deltam \rightarrow \R$. Given such a test statistic $T$ and presuming from now on that w.l.o.g.\ high values of $T(x,\pi)$ indicate `extreme' observations under the null distribution $\P_\pi$, the $p$-value of $x$ is defined as the probability
\begin{equation}
p_T(x,\pi) := \P_\pi(T(X,\pi) \geq T(x,\pi)) 
\label{Eq:Pval}
\end{equation}
of observing an observation that is at least as extreme under the null hypothesis.

The \emph{family of power divergence statistics} introduced by \citet{CR84} offers a variety of test statistics for multinomial goodness-of-fit tests. It is defined as
\begin{equation}
T^\lambda(x,\pi) := \frac{2}{\lambda(\lambda + 1)} \sum_{j = 1}^m x_j \left(\left(\frac{x_j}{n\pi_j}\right)^\lambda - 1\right) \text{ for } \lambda \in\R\setminus\{-1,0\}
\label{Eq:PowDivFam}
\end{equation}
and as the pointwise limit in (\ref{Eq:PowDivFam}) for $\lambda \in \{-1,0\}$. Notably, this includes \emph{Pearson's chi-square} statistic
$$T^{\chi^2}(x,\pi) := \sum_{j = 1}^m \frac{(x_j - n\pi_j)^2}{n\pi_j} = \sum_{j = 1}^m \frac{x_j^2}{n\pi_j} - n = T^1(x,\pi)$$
as well as the \emph{log-likelihood ratio} (or $G$-test) statistic
$$T^{G}(x,\pi) := 2\log \frac{f_{n,\frac xn}(x)}{f_{n,\pi}(x)} = 2 \sum_{j = 1}^m x_j \log \frac{x_j}{n\pi_j} = T^0(x,\pi).$$
Under a null hypothesis with $\pi_i > 0$ for all $i = 1,\dots,m$, every power divergence statistic is asymptotically chi-square distributed with $m-1$ degrees of freedom.

A natural test statistic arises if an `extreme' observation is simply understood to mean an unlikely one, that is, if the pmf itself is used as test statistic. In what follows, a strictly decreasing transformation of the pmf is used instead, which ensures that large values of the test statistic indicate extreme observations. Furthermore, this strictly decreasing transformation is chosen such that the resulting test statistic is asymptotically chi-square distributed.
To this end, let $\Gamma$ denote the Gamma function and
$$\bar{f}_{n,p} \colon \{x \in \R_{\geq 0}^m \mid x_1 + \ldots + x_m = n\} \rightarrow \R,x\mapsto \Gamma(n+1) \prod_{j = 1}^m \frac{p_j^{x_j}}{\Gamma(x_j+1)}$$
the continuous extension of the pmf $f_{n,p}$ to the convex hull of the discrete simplex $\Deltamn$\add{. T}he \emph{probability mass test statistic} \add{is defined} as
$$T^\P(x,\pi) := -2\log \frac{f_{n,\pi}(x)}{\bar{f}_{n,\pi}(n\pi)}.$$
Obviously, the choice of strictly decreasing transformation does not affect the (exact) $p$-value given by (\ref{Eq:Pval}) for $T = T^\P$.
The following theorem gives rise to an asymptotic approximation of $p$-values derived from the probability mass test statistic, which has not been studied previously. In the simulation study of Section \ref{Sec:4}, the fit of this approximation is assessed empirically using exact $p$-values \add{computed} with the new method for samples of size $n = 100$ with $m = 5$ categories.

\begin{Thm}
	If $X \sim \mathcal{M}_m(n,\pi)$ follows a multinomial distribution with $n \in \N$ and $\pi \in \Deltam$ such that $\pi_j > 0$ for $j = 1,\dots,m$, then $T^\P(X,\pi)$ converges in distribution to a chi-square distribution $\chi^2_{m-1}$ with $m-1$ degrees of freedom as $n \rightarrow \infty$.
	\label{Thm:AsApprox}
\end{Thm}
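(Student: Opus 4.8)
The plan is to show that $T^\P(X,\pi)$ agrees with Pearson's chi-square statistic $T^{\chi^2}(X,\pi)$ up to a term that vanishes in probability, and then to invoke the already-stated fact that every power divergence statistic---in particular $T^{\chi^2} = T^1$---converges in distribution to $\chi^2_{m-1}$. Writing $T^\P(X,\pi) = T^{\chi^2}(X,\pi) + o_p(1)$, the claim follows immediately by Slutsky's theorem. Throughout I would set $Z_j := (X_j - n\pi_j)/\sqrt n$, which is $O_p(1)$ by the multivariate central limit theorem, so that deviations of the counts from their means are of order $\sqrt n$.

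First I would simplify the statistic. Since the pmf $f_{n,\pi}$ and its continuous extension $\bar f_{n,\pi}$ share the common factor $\Gamma(n+1) = n!$, it cancels in the defining ratio, leaving
$$T^\P(X,\pi) = -2\sum_{j=1}^m \Bigl[(X_j - n\pi_j)\log\pi_j - \log\Gamma(X_j+1) + \log\Gamma(n\pi_j+1)\Bigr].$$
Next, setting $g(z) := \log\Gamma(z+1)$, I would Taylor-expand $g(X_j)$ about $n\pi_j$. Using $g' = \psi(\,\cdot+1)$ and $g'' = \psi'(\,\cdot+1)$ together with the standard asymptotics $\psi(z+1) = \log z + O(1/z)$, $\psi'(z+1) = 1/z + O(1/z^2)$ and $\psi''(z+1) = O(1/z^2)$, the first-order term contributes $\bigl(\log(n\pi_j) + O(1/n)\bigr)(X_j - n\pi_j)$, the second-order term contributes $\tfrac{1}{2n\pi_j}(X_j - n\pi_j)^2$ up to $O(1/n^2)(X_j-n\pi_j)^2$, and the cubic remainder is $O(1/\xi_j^2)(X_j - n\pi_j)^3$ for an intermediate point $\xi_j$.

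Collecting terms is the decisive step. The linear contributions combine into $2\log n\sum_j (X_j - n\pi_j)$, which vanishes identically because $\sum_j X_j = n = \sum_j n\pi_j$, leaving only residuals of order $O_p(n^{-1/2})$; the surviving second-order part is exactly $\sum_j (X_j - n\pi_j)^2/(n\pi_j) = T^{\chi^2}(X,\pi)$. Hence $T^\P(X,\pi) = T^{\chi^2}(X,\pi) + o_p(1)$, as required.

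The main obstacle is making the remainder control rigorous: although the expansion point $n\pi_j$ grows, the intermediate point $\xi_j$ lies between $X_j$ and $n\pi_j$ and could a priori be small, so the $O(1/\xi_j^2)$ bound need not hold uniformly. I would resolve this by working on the event $A_n := \{\,|X_j - n\pi_j| \le n^{2/3} \text{ for all } j\,\}$, whose probability tends to $1$ (for instance by Chebyshev's inequality, since $\operatorname{Var}(X_j) = O(n)$). On $A_n$ one has $\xi_j \ge n\pi_j - n^{2/3} \to \infty$ uniformly in $j$, so the digamma bounds apply uniformly and the cubic remainder is $O_p(n^{-2}\cdot n^{3/2}) = O_p(n^{-1/2})$, genuinely $o_p(1)$. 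The same event justifies treating the quadratic residual and the leftover linear pieces as $o_p(1)$, completing the reduction to $T^{\chi^2}$.
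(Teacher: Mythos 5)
Your proof is correct, but it takes a genuinely different route from the paper's. The paper compares $T^\P$ with the log-likelihood ratio statistic $T^{G}$ rather than with Pearson's $T^{\chi^2}$: a Stirling-formula lemma with explicit remainder $0 < r(x) < \frac{1}{12x}$ (Lemma \ref{Lemma:DiffLLRProb}) yields the exact identity $T^\P(X,\pi) - T^{G}(X,\pi) = \sum_{j} \bigl(\log\tfrac{X_j}{n\pi_j} + \mathcal{O}(1/X_j) - \mathcal{O}(1/n)\bigr)$, and each term vanishes in probability --- the logarithmic ones by the law of large numbers and continuous mapping, the remainder terms because Stirling's bound is global --- so no Taylor remainder at a random intermediate point ever arises; the conclusion then follows from the known $\chi^2_{m-1}$ limit of $T^{G}$. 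Your expansion of $\log\Gamma(\cdot+1)$ about $n\pi_j$ is sound: the algebra checks out (the $\log n$ linear contributions cancel because $\sum_j (X_j - n\pi_j) = 0$, the quadratic term is exactly $T^{\chi^2}$, and the residual linear and quadratic pieces are $\mathcal{O}_p(n^{-1/2})$ and $\mathcal{O}_p(n^{-1})$), and you correctly identified the one real danger, namely that the Lagrange remainder involves $\psi''$ at an intermediate point $\xi_j$ that can be small when $X_j$ falls far below $n\pi_j$; restricting to the event $A_n = \{|X_j - n\pi_j| \le n^{2/3} \text{ for all } j\}$, whose probability tends to one by Chebyshev's inequality, makes the bound uniform and the remainder $o_p(1)$. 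What the paper's route buys is brevity and uniformity: the comparison is an exact closed-form identity with deterministic error bounds, and the only stochastic input is $X_j/(n\pi_j) \to 1$ in probability. What your route buys is that it reduces directly to the most classical limit theorem (Pearson's chi-square) and is self-contained apart from the digamma asymptotics, at the price of the truncation machinery. Both arguments are complete and correct.
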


\begin{proof}
	By Lemma \ref{Lemma:DiffLLRProb} (in Appendix \ref{App:A}), the difference between the log-likelihood ratio and the probability mass statistic is
	$$T^\P(X,\pi) - T^{G}(X,\pi) = \sum_{j = 1}^m \left(\log \frac{X_j}{n\pi_j} + \mathcal{O}(1/X_j) - \mathcal{O}(1/n)\right).
	$$
	Clearly, the bounded terms converge to zero in probability\add{,} and the $\log \frac{X_j}{n\pi_j}$ terms converge to zero in probability by the continuous mapping theorem. Hence, the probability mass statistic has the same asymptotic distribution as the log-likelihood ratio statistic.
\end{proof}

In what follows, the focus is on the chi-square, log-likelihood ratio and probability mass statistics.

\subsection{Acceptance \add{r}egions}

As outlined in the introduction, acceptance regions are of major importance to the idea pursued in this work. Given a test statistic $T$, the \emph{acceptance region at level} $\alpha > 0$ is defined \add{using $p$-values given by \eqref{Eq:Pval}} as
$$A^T_{n,\pi}(\alpha) := \{x \in \Deltamn \mid \add{p_T(x,\pi) > \alpha}\}.$$
Equivalently, the acceptance region can be written as the \emph{sublevel set} of $T(\cdot,\pi)$ at \add{the lowest} $(1-\alpha)$-quantile $t_{1-\alpha}$ of $T(X,\pi)$ under the null hypothesis $X\sim\mathcal{M}_m(n,\pi)$, i.e.,
\begin{equation}
	A^T_{n,\pi}(\alpha) = \{x \in \Deltamn \mid T(x,\pi) \leq t_{1-\alpha}\}.
	\label{Eq:AR}
\end{equation}
\add{By construction, the probability mass test statistic assigns the samples with largest probabilities to the acceptance region. Therefore, it yields a smallest acceptance region precisely if removing any point from $A^{T^\P}_{n,\pi}(\alpha)$ yields a set with probability mass less than $1-\alpha$.}
If tests are randomized to ensure equal level and size of the test, this property can be refined to yield an optimality property of the probability mass test's critical function. Figure \ref{Fig:ARegs} illustrates acceptance regions for different test statistics.

In Section \ref{Sec:3}, it will be shown that acceptance regions of the chi-square, log-likelihood ratio and probability mass test statistic all grow at a rate $\mathcal{O}(n^\frac{m-1}{2})$, as their diameter grows at a rate $\mathcal{O}(\sqrt{n})$ if $\alpha > 0$ is fixed, see Proposition \ref{Prop:GrowthAR}.

\begin{figure}\centering
	\includegraphics[scale = 0.5]{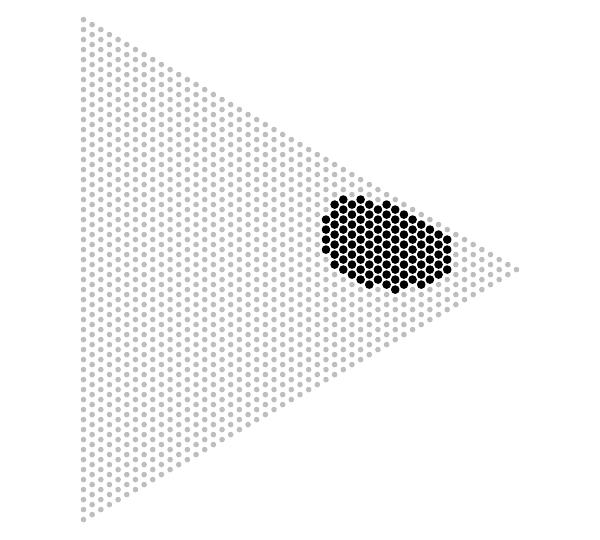}
	\includegraphics[scale = 0.5]{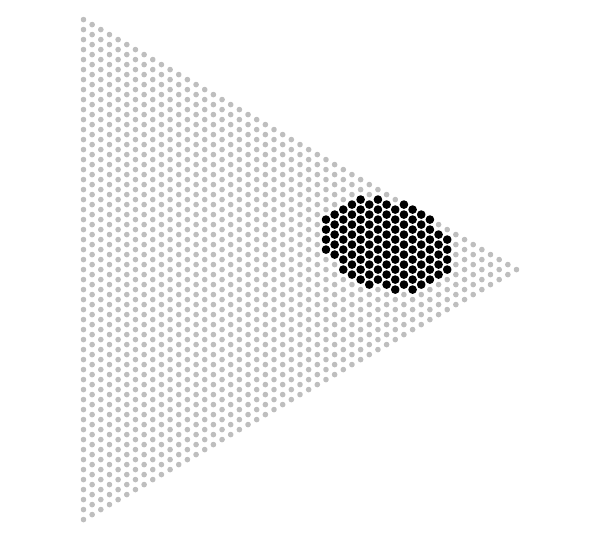}
	\includegraphics[scale = 0.5]{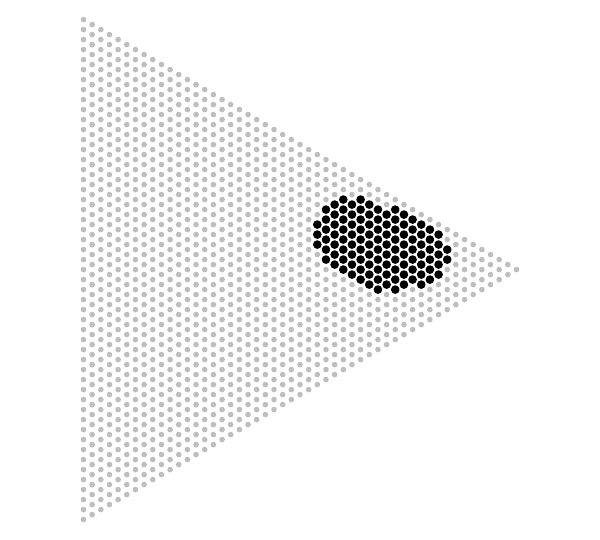}
	\caption{Acceptance regions \add{(black)} of probability mass (left), chi-square (center) and log-likelihood ratio (right) statistics at level $\alpha = 0.05$ for $n = 50$ and $\pi = (\frac{1}{10},\frac{7}{10},\frac{2}{10})$. The regions contain 108, 111 and 111 points, respectively (left to right). The tests are of size $0.0495, 0.0492$ and $0.0481$, respectively.}
	\label{Fig:ARegs}
\end{figure}

\subsection{Power and \add{b}ias}

The \emph{power function} of a test $T$ of the null hypothesis $p = \pi$ at level $\alpha$ is
$$\Deltam \rightarrow [0,1], p \mapsto 1-\P_p(T(X) \in A_{n,\pi}^T(\alpha)),$$
which is the probability of rejecting the null hypothesis at level $\alpha$ if the true parameter is $p$.  The \emph{size} of a test is its power at $p = \pi$. A test $T$ is said to be \emph{unbiased} (for the null $p = \pi$ at level $\alpha$) if its power is minimized at $p = \pi$.

In \add{the} case of the uniform null hypothesis, i.e., $\pi = (\frac{1}{m},\dots,\frac{1}{m})$, \citet[Theorem 2.1]{CS75} proved that the power function increases away from $p = \pi$ for \add{test} statistics of the form
$$T(x) = \sum_{j = 1}^m h(x_j)$$
if $h$ is a convex function. They concluded that tests based on the chi-square and the log-likelihood ratio test statistic are unbiased for the uniform null hypothesis. 
As a corollary to their theorem, it shall be noted that this also applies to the probability mass test statistic.

\begin{Cor}[to \citealp{CS75}, Theorem 2.1]
	The probability mass test is unbiased for the uniform null hypothesis $p = \pi = (\frac{1}{m},\dots,\frac{1}{m})$.
\end{Cor}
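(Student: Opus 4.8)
The plan is to reduce the claim directly to Theorem 2.1 of \citet{CS75}, which is quoted in the excerpt: for the uniform null it guarantees that the power function increases away from $p=\pi$ whenever the statistic decomposes as $T(x)=\sum_{j=1}^m h(x_j)$ with $h$ convex. The entire task thus reduces to exhibiting such a decomposition for $T^\P(\cdot,\pi)$ at $\pi=(\frac1m,\dots,\frac1m)$ and verifying convexity of the resulting summand.

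First I would evaluate the probability mass statistic explicitly at the uniform null. Since $x\in\Deltamn$ satisfies $\sum_j x_j=n$, the pmf factorizes as $f_{n,\pi}(x)=n!\,(1/m)^n\prod_{j=1}^m 1/x_j!$, and the normalizing denominator $\bar f_{n,\pi}(n\pi)$ is a fixed number independent of $x$. Taking $-2\log$ of the ratio then collects all the $x$-dependence into the single additively separable term $2\sum_{j=1}^m\log(x_j!)=2\sum_{j=1}^m\log\Gamma(x_j+1)$, so that
$$T^\P(x,\pi)=C+\sum_{j=1}^m h(x_j),\qquad h(t):=2\log\Gamma(t+1),$$
with $C$ constant on $\Deltamn$. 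Folding $C/m$ into $h$ (an additive constant preserves convexity, and in any case a constant shift leaves the acceptance region, being a sublevel set, unchanged) puts the statistic in exactly the Chernoff--Savage form.

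The one genuine ingredient is convexity of $h$, which is the classical fact underlying the Bohr--Mollerup characterization that $\log\Gamma$ is convex on $(0,\infty)$; equivalently $t\mapsto\log\Gamma(t+1)$ is convex on $(-1,\infty)$, hence on $[0,n]$ and a fortiori at the integer arguments $x_j\in\{0,\dots,n\}$ relevant here. With $h$ convex, Theorem 2.1 of \citet{CS75} applies verbatim and yields that the power function of the probability mass test increases away from $p=\pi$, so the test is unbiased. I do not anticipate any real obstacle: the only points deserving care are confirming that both $\bar f_{n,\pi}(n\pi)$ and the factor $n!\,(1/m)^n$ are genuinely constant on $\Deltamn$ (so they do not disturb the separable structure), and noting that whatever notion of convexity the Chernoff--Savage argument requires is implied by ordinary real convexity of $\log\Gamma$.
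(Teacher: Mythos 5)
Your proposal is correct and follows essentially the same route as the paper: both write $T^\P(\cdot,\pi)$ in the Chernoff--Savage separable form $\sum_j h(x_j)$ (up to constants on $\Deltamn$), obtain convexity of $h$ from the logarithmic convexity of the Gamma function via the Bohr--Mollerup theorem, and then invoke Theorem 2.1 of \citet{CS75}. The only cosmetic difference is that you specialize to the uniform null before decomposing (making $\prod_j \pi_j^{x_j}$ constant), whereas the paper keeps the linear terms $-x_j\log\pi_j$ in its decomposition; this changes nothing of substance.
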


\begin{proof}
	Since the probability mass statistic can be written as
	$$T^\P(x,\pi) = 2\sum_{j=1}^m \log\Gamma(x_j+1) - x_j\log\pi_j - \log\frac{\Gamma(n\pi_j+1)}{\pi_j^{n\pi_j}},$$
	this is an immediate consequence of the fact that the Gamma function is logarithmically convex on the positive real numbers, which is part of a characterization given by the Bohr-Mollerup theorem \citep[Theorem 2.4.2]{BW10}.
\end{proof}

Many authors \citep[e.g.,][]{WK72,CR84,WOK87,PP03} have conducted small sample studies to investigate the power of chi-square, log-likelihood ratio and other tests. \add{When conducting such} studies, $\pi,n$ and $\alpha$ need to be chosen, all of which influence the resulting power function. Furthermore, it is frequently infeasible to assess the power function across all alternatives, and so alternatives of interest need to be picked.
Therefore, most of these studies focused on the case of the uniform null hypothesis. In this case, the chi-square test has greater power for alternatives that assign a large proportion of the probability mass to relatively few \add{categories}, whereas the log-likelihood ratio test has greater power for alternatives that assign considerable probability mass to many \add{categories} \citep[see also][]{KL80}.

\begin{figure}\centering
	\includegraphics[scale = 0.5]{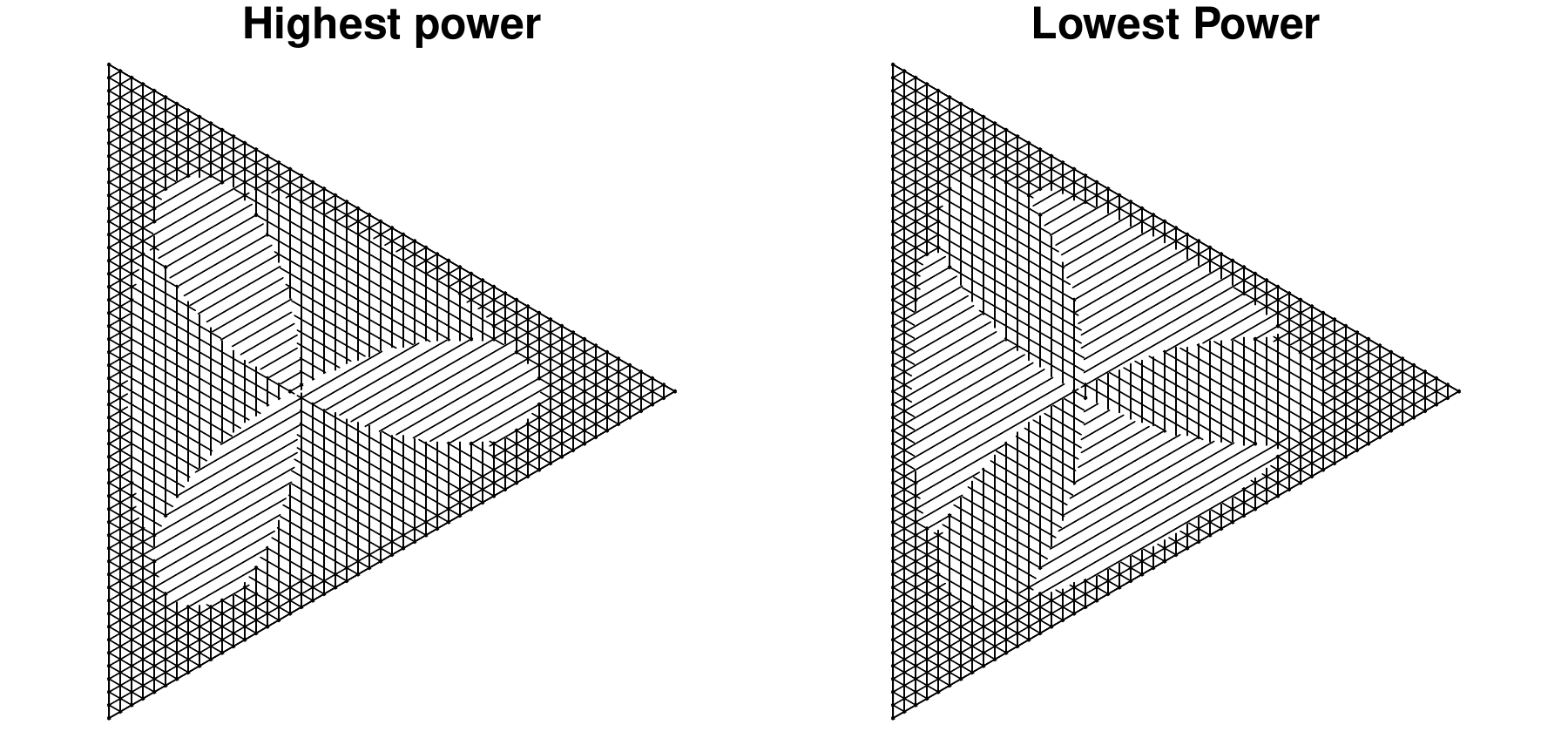}
	\includegraphics[scale = 0.5]{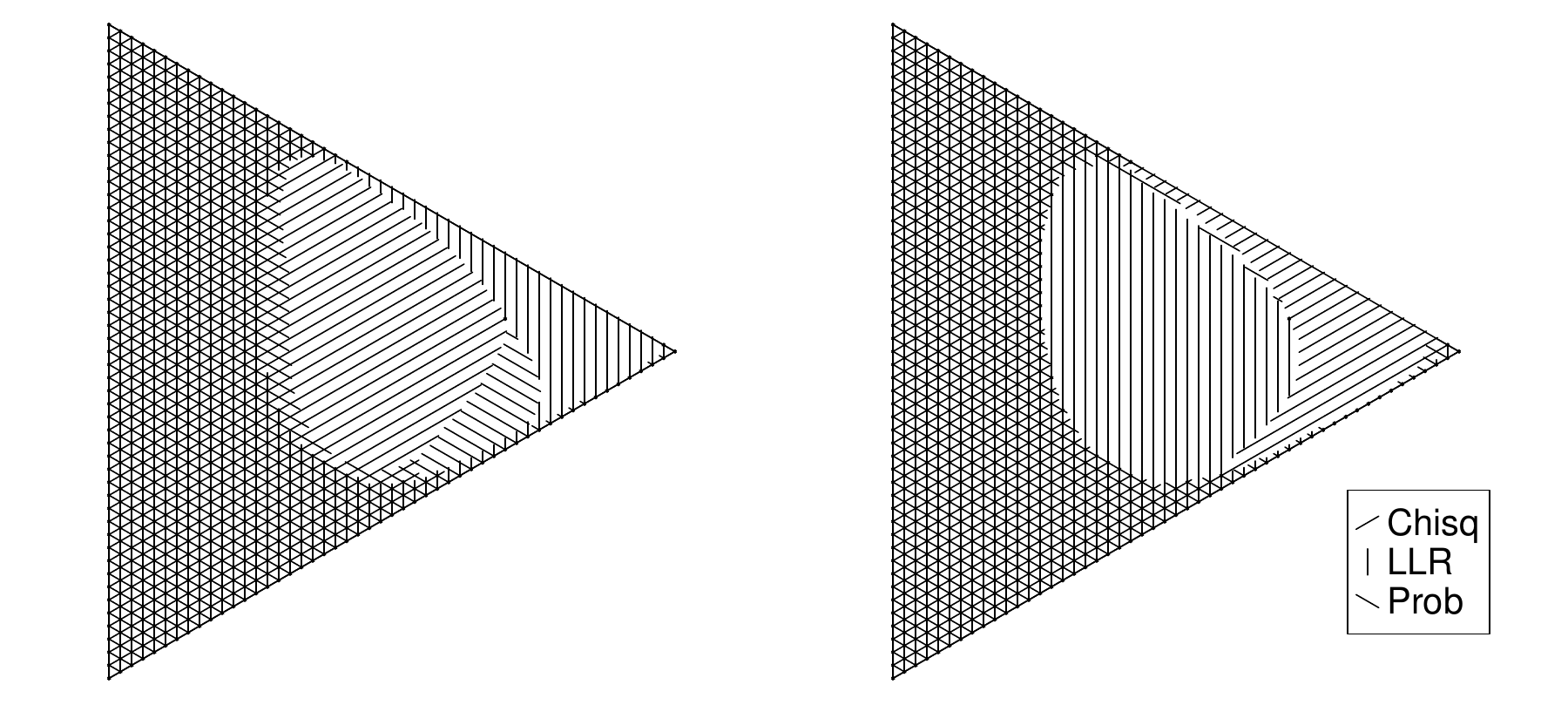}
	\caption{Ternary plots indicating \add{which randomized tests of size $\alpha = 0.05$ yields} the \add{highest (left) and lowest (right) power} for the uniform null hypothesis $\pi = (\frac{1}{3},\frac{1}{3},\frac{1}{3})$ (top) and $\pi = (\frac{1}{10},\frac{7}{10},\frac{2}{10})$ (bottom) for $n = 50$. \add{Overlapping lines} indicate nearly equal powers (\add{$\text{difference} < 10^{-5}$}).}
	\label{Fig:PowBestWorst}
\end{figure}

In the ternary case, that is, if $m = 3$, comparisons on the full probability simplex are visually accessible. Figure \ref{Fig:PowBestWorst} illustrates, which of the three test statistics yields the highest and lowest power across the full ternary probability simplex.
\add{As the actual test size, which is frequently smaller than the level $\alpha$, depends on the test statistic, the resulting power functions are difficult to compare directly.} To account for this, the tests are randomized to ensure that their respective size matches the level. For a test $T$ and level $\alpha$, let $s_{n,\pi}(T,\alpha) = 1-\P_\pi(T(X) \in A^T_{n,\pi}(\alpha))$ denote the actual size of the test. The \emph{critical function}
$$\phi \colon \Deltamn \rightarrow [0,1], x \mapsto \begin{cases}
0,& \text{if } T(x,\pi) < t_{1-\alpha}, \\
\frac{\alpha - s_{n,\pi}(T,\alpha)}{\P_\pi(T(X) = t_{1-\alpha})},& \text{if } T(x,\pi) = t_{1-\alpha}, \\
1,& \text{if } T(x,\pi) > t_{1-\alpha},
\end{cases}$$ defines a randomized test\footnote{Randomized tests like this traditionally arise in the theory of uniformly most powerful tests, see for example \citet[Chapter 3]{LR05}.}, which rejects the null hypothesis with probability $\phi(x)$ if $x$ is observed. The power function of the randomized version of a test $T$ at level $\alpha$ is
$$p \mapsto \sum_{x \in \Deltamn} \phi(x) \P_p(X = x) = 1 - \sum_{x \in A_{n,\pi}^T(\alpha)} (1-\phi(x)) \P_p(X=x).$$
With this, the probability mass test minimizes the acceptance region in the sense that it minimizes the sum
$$\sum_{x \in \Deltamn} (1-\phi(x))$$
across all randomized tests $\phi$ with $\sum_x \phi(x) f_{n,\pi}(x) = \alpha$.

\begin{figure}\centering
	\includegraphics[scale = 0.5,page=1]{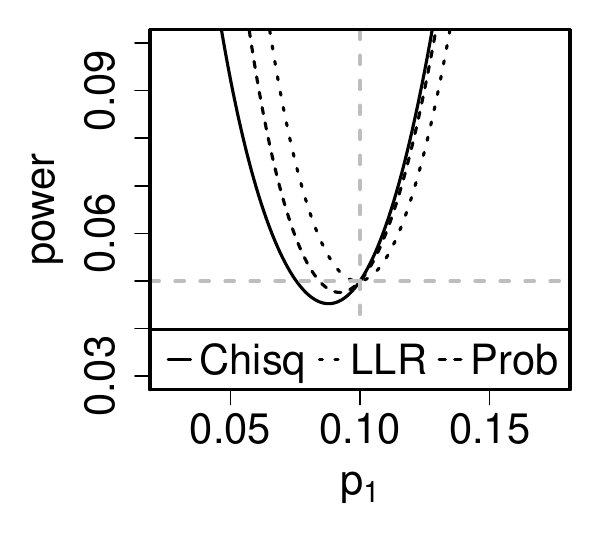}
	\includegraphics[scale = 0.5,page=2]{figs/power_plots_ex.pdf}
	\includegraphics[scale = 0.5,page=3]{figs/power_plots_ex.pdf}
	\caption{Power functions \add{of randomized tests of size $\alpha = 0.05$} along alternatives given by $p(p_i,i), i = 1,2,3$ \add{with} null hypothesis $\pi = (\frac{1}{10},\frac{7}{10},\frac{2}{10})$ and sample size $n = 50$.}
	\label{Fig:PowPlots}
\end{figure}

Figure \ref{Fig:PowBestWorst} \add{shows} that the probability mass test and the log-likelihood ratio test for the uniform null hypothesis \add{at level $\alpha = 0.05$} are the same for \add{$n = 50$}. \add{T}his is \add{a coincidence, and} for other choices of $\alpha$ (e.g., $\alpha = 0.13$, for which coincidentally the probability mass statistic yields the same acceptance region as the chi-square statistic) the acceptance regions differ, and so do the power functions. 

Figure \ref{Fig:PowPlots} quantitatively compares powers along alternatives of the form
$$p(q,i) = (\tilde{q}\pi_1,\dots,\tilde{q}\pi_{i-1},q,\tilde{q}\pi_{i+1},\dots,\tilde{q}\pi_m) \in \Deltam\text{ with } \tilde{q} = \frac{1-q}{1-\pi_i}$$
for $i = 1,\dots,m$ and $q \in [0,1]$. This yields parametrizations of the lines through $\pi$ and a corner of the probability simplex. The figures illustrate that in the case $n = 50, \pi = (\frac{1}{10},\frac{7}{10},\frac{2}{10})$ and $\alpha = 0.05$, the log-likelihood ratio test, arguably, does not show any visible bias, whereas the chi-square test shows the most bias. The power function of the probability mass test lies in between the other power functions across most of the probability simplex, and so the probability mass test might serve as a good compromise in terms of power.

\section{Exact $p$-Values via Acceptance Regions}\label{Sec:3}

Throughout this section, $T$ is \add{a} test statistic\add{,} and $m,n \in \N$ and $\pi \in \Deltam$ are \add{fixed}. To ease notation, the subscripts in the pmf of the null distribution are omitted, i.e., $f = f_{n,\pi}$ and the test statistic $T$ is considered as a function on the sample space only, i.e., $T(\cdot) = T(\cdot,\pi)$. Let
$$d\colon \R^m\times\R^m \rightarrow \R_{\geq 0}, (x,y) \mapsto \frac{1}{2}\Vert x - y \Vert_1 = \frac{1}{2}\sum_j \vert x_j - y_j \vert$$
be a rescaled version of the \emph{Manhattan distance} and 
$$B_r(y) = \{x\in\Deltamn \mid d(x,y)\leq r\}$$
the \add{discrete} ball with radius $r \in \N$ and center $y\in \Deltamn$. Furthermore, $e_i = (\delta_{ij})_{j = 1}^m$ denotes the $i$-th vector of the standard basis of $\R^m$\add{, where $\delta_{ij}$ is the Kronecker delta}.

\subsection{Finding acceptance regions using discrete convex analysis}

As alluded to in the introduction, an acceptance region $A = A^T_{n,\pi}(\alpha)$ for $\alpha \in (0,1)$ can be found without enumerating all points of the sample space $\Deltamn$, but only considering points in some ball around the expected value for many test statistics. Specifically, if $T$ is \emph{weakly quasi M-convex}, that is, \add{if} for all distinct $x,y \in \Deltamn$ there exist indices $i,j \in \{1,\dots,m\}$ such that $x_i > y_i, x_j < y_j$ and
$$T(x-e_i+e_j) \leq T(x) \quad\text{ or }\quad T(y+e_i-e_j) \leq T(y),$$
the following theorem\add{, which is proven at the end of this subsection,} holds.

\begin{Thm}
\add{Let $T$ be weakly quasi M-convex, and suppose $y \in \Deltamn$, $r \in \N$ and $\alpha \in (0,1)$ are such that $
\sum_{x \in B_r(y)} f(x)
\geq 1-\alpha$. Let $t \in \R$ be the smallest level such that the sublevel set $A = \{x \in B_r(y)\mid T(x)\leq t\}$ satisfies $\sum_{x\in A} f(x) \geq 1-\alpha$. 
If $A \subseteq B_{r-1}(y)$, then $A$ is the acceptance region $A^T_{n,\pi}(\alpha)$.}
\label{Thm:MainObs}
\end{Thm}

Hence, an acceptance region can be found by iteratively enumerating a ball of increasing radius with arbitrary center until a sublevel set with enough probability mass is found and this sublevel set remains unchanged upon further increasing the ball\add{, as} illustrated in the introduction \add{for an acceptance region of the probability mass statistic}, see Figure \ref{Fig:AReg}.

The following proposition ensures that this approach can be applied to the chi-square, log-likelihood ratio and probability mass test statistics.

\begin{Prop}
	~
	\begin{enumerate}[label = \alph*)]
		\item The probability mass test statistic $T^\P$ is weakly quasi M-convex.
		\item The power divergence test statistic $T^\lambda$ is weakly quasi M-convex if $\lambda \geq 0$.
	\end{enumerate}
	\label{Prop:QMC}
\end{Prop}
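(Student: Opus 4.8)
The plan is to exploit that both statistics are \emph{separable}: up to an additive constant they have the form $T(x) = \sum_{j=1}^m g_j(x_j)$, and for such $T$ weak quasi M-convexity reduces to \emph{discrete convexity} of each summand, i.e.\ to the forward differences $\Delta g_j(t) := g_j(t+1) - g_j(t)$ being non-decreasing in $t \in \N_0$. First I would record the representations: for $\lambda > 0$ one has $g_j(t) = \frac{2}{\lambda(\lambda+1)}\bigl[(n\pi_j)^{-\lambda}t^{\lambda+1} - t\bigr]$, for $\lambda = 0$ the $G$-statistic gives $g_j(t) = 2t\log\frac{t}{n\pi_j}$ (with $0\log 0 := 0$), and the representation in the proof of the Corollary gives $T^\P(x) = \sum_j g_j(x_j) + c$ with $g_j(t) = 2[\log\Gamma(t+1) - t\log\pi_j]$ and $c$ independent of $x$.

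Next I would prove the reduction. For distinct $x,y \in \Deltamn$ the index sets $I = \{i : x_i > y_i\}$ and $J = \{j : x_j < y_j\}$ are both nonempty (equal coordinate sums force a positive and a negative difference) and disjoint. Fixing any $i \in I$, $j \in J$, the moves $x - e_i + e_j$ and $y + e_i - e_j$ stay in $\Deltamn$ because $x_i - 1 \geq y_i \geq 0$ and $y_j - 1 \geq x_j \geq 0$, and separability yields
$$T(x - e_i + e_j) - T(x) = \Delta g_j(x_j) - \Delta g_i(x_i - 1), \qquad T(y + e_i - e_j) - T(y) = \Delta g_i(y_i) - \Delta g_j(y_j - 1).$$
If both defining inequalities failed for this pair, I would have $\Delta g_i(x_i-1) < \Delta g_j(x_j)$ and $\Delta g_j(y_j-1) < \Delta g_i(y_i)$; combined with $\Delta g_i(x_i-1) \geq \Delta g_i(y_i)$ and $\Delta g_j(y_j-1) \geq \Delta g_j(x_j)$, which hold by discrete convexity since $x_i - 1 \geq y_i$ and $y_j - 1 \geq x_j$, this chains to the contradiction $\Delta g_i(y_i) < \Delta g_i(y_i)$. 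Hence at least one inequality holds, establishing weak quasi M-convexity (indeed for every admissible pair).

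It then remains to verify discrete convexity of each $g_j$. The linear term $-t$ (resp.\ $-2t\log\pi_j$) contributes nothing to the second difference, so for $T^\P$ it suffices that $\Delta[\log\Gamma(t+1)] = \log(t+1)$ is increasing---equivalently, that $\log\Gamma$ is log-convex, the Bohr-Mollerup property already invoked for the Corollary. For $T^\lambda$ the remaining part is a positive multiple of $t^{\lambda+1}$ ($\lambda > 0$) or of $t\log t$ ($\lambda = 0$), both convex on $\R_{\geq 0}$, and continuous convexity forces non-decreasing forward differences on $\N_0$.

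I do not anticipate a real obstacle: once separability and convexity are in hand, the exchange step is a one-line contradiction. The only care needed is at the boundary $x_j = 0$ in the $\lambda = 0$ and $T^\P$ cases, covered by the convention $0\log 0 = 0$ and by convexity of $t\log t$ and $\log\Gamma(t+1)$ on $[0,\infty)$, and in confirming that the additive constant $c$ in $T^\P$ truly does not depend on $x$ so that it cancels in every difference above.
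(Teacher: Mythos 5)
Your proof is correct, and it takes a genuinely different route from the paper's. The paper argues directly from the definition, separately for each statistic: for $T^\P$ it factors the ratio $f(y)/f(x)$ into two products with equally many factors and uses a pigeonhole argument on the product being at least $1$ to locate a good exchange pair; for $T^\lambda$ it assumes w.l.o.g.\ $T(x)\geq T(y)$, chooses $i^*$ and $j^*$ as argmax/argmin of the one-step increments, runs a chain of inequalities, and settles $\lambda=0$ by a separate limiting argument. You instead isolate a structural lemma --- any separable statistic $T(x)=\sum_j g_j(x_j)+c$ whose summands have non-decreasing forward differences is weakly quasi M-convex --- proved by a two-line contradiction from $x_i-1\geq y_i$ and $y_j-1\geq x_j$, after which each statistic only requires checking discrete convexity of a univariate function ($t^{\lambda+1}$ for $\lambda>0$, $t\log t$ for $\lambda=0$, $\log\Gamma(t+1)$ for $T^\P$). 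Your route unifies parts a) and b), dispenses with the w.l.o.g.\ ordering and with the $\lambda\to 0$ limit computation, yields the stronger conclusion that the exchange inequality holds for \emph{every} admissible pair $(i,j)$ rather than some pair, and connects to the standard fact of discrete convex analysis that separable convex functions on $\{x \mid \sum_j x_j = n\}$ are M-convex; the paper's computations, by contrast, are tied to the specific algebraic form of each statistic and are considerably longer, without buying anything beyond what your lemma gives. Two cosmetic remarks: for $T^\P$ the needed fact is just that $\Delta[\log\Gamma(t+1)]=\log(t+1)$ is increasing in $t$, which follows from the functional equation $\Gamma(t+2)=(t+1)\Gamma(t+1)$ alone, so invoking Bohr--Mollerup (and the phrase ``$\log\Gamma$ is log-convex'', where you mean $\log\Gamma$ is convex, i.e.\ $\Gamma$ is log-convex) is heavier than necessary; and ``reduces to'' should be read as one implication only --- discrete convexity of the summands suffices --- which is all your argument uses.
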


\begin{proof}
	Throughout the proof, let $x,y \in \Deltamn$ such that $x\neq y$, and define the index sets 
	$$S^+ := \{i\mid x_i > y_i\} \quad \text{ and } \quad S^- := \{j\mid x_j < y_j\}.$$
	\begin{enumerate}[label = \alph*)]
		\item Let $T = T^\P$ and w.l.o.g.\ $T(x) \geq T(y)$. Then
		\begin{align*}
		T(y) - T(x) &= -2\log \frac{f(y)}{f(x)} = -2 \log\left(\prod_{i \in S^+} \frac{x_i!}{y_i!}\pi_i^{y_i - x_i} \cdot \prod_{j \in S^-} \frac{x_j!}{y_j!}\pi_j^{y_j - x_j}\right) \\
		&= -2\log\left(\prod_{i \in S^+} \prod_{k = 1}^{x_i - y_i} \frac{y_i + k}{\pi_i} \cdot \prod_{j \in S^-} \prod_{k = 1}^{y_j - x_j} \frac{\pi_j}{x_j + k}\right) \leq 0.
		\end{align*}
		Both double products contain an equal number of multiplicands (since $\sum_j x_j = \sum_j y_j = n$) and are nonempty (since $x \neq y$). As the entire product is at least 1, there exist indices $i \in S^+$ and $j\in S^-$ and natural numbers $k^+ \leq x_i - y_i$ and $k^-\leq y_j-x_j$ such that the second inequality holds in
		$$\frac{\pi_j}{x_j + 1} \geq \frac{\pi_j}{x_j + k^-} \geq \frac{\pi_i}{y_i + k^+} \geq \frac{\pi_i}{x_i}.$$
		Therefore, the inequality
		$$T(x - e_i + e_j) = T(x) - 2\log\left(\frac{x_i}{\pi_i}\cdot\frac{\pi_j}{x_j + 1}\right) \leq T(x)$$ holds.
		\item See Appendix \ref{App:B}.
	\end{enumerate}
	\vspace{-\baselineskip}
\end{proof}

The rest of this section is devoted to the proof of Theorem \ref{Thm:MainObs}. For further details on weak quasi M-convexity and discrete convex analysis in general, see \citet{Mur03}. 

Weakly quasi M-convex functions have the important property that their sublevel sets are weakly quasi M-convex sets \citep[Theorem 3.10]{MS03}. A subset $M \subset \Deltamn$ is \emph{weakly quasi M-convex} if for all distinct $x,y \in M$ there exist indices $i,j \in \{1,\dots,m\}$ such that $x_i > y_i, x_j < y_j$ and
$$x - e_i + e_j \in M \quad\text{ or }\quad y + e_i - e_j \in M.$$
Equivalently, this can be characterized as follows.

\begin{Lemma}
	A subset \add{$M \subset \Deltamn$} is weakly quasi M-convex if and only if 
	for all $x,y \in M$ and $d = d(x,y)$ there exists a sequence $x_0,x_1,\dots,x_d \in M$ with $x_0 = x, x_d = y$ and $d(x_i,x_{i+1}) = 1$ for all $i = 0,1,\dots,d-1$.
	\label{Lemma:NewChar}
\end{Lemma}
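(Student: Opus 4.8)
The plan is to build everything on one elementary distance computation and then run an induction for the harder implication. First I would record the following fact: for $x,y\in\Deltamn$ and distinct indices $i,j$, the unit move $x\mapsto x-e_i+e_j$ changes only the $i$-th and $j$-th terms of $\Vert\,\cdot-y\Vert_1$, each by exactly one, so that $d(x-e_i+e_j,y)=d(x,y)-1$ precisely when $x_i>y_i$ and $x_j<y_j$, and $d(x-e_i+e_j,y)=d(x,y)+1$ otherwise. This single fact is the bridge between the local combinatorial condition in the definition of weak quasi M-convexity and the global existence of a geodesic path, so I would establish it at the outset.

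For the implication from the path characterization to weak quasi M-convexity, take distinct $x,y\in M$ with $d=d(x,y)$ and a sequence $x_0=x,\dots,x_d=y$ in $M$ with $d(x_k,x_{k+1})=1$. Comparing the sum of these unit steps with the triangle inequality $d(x,y)\le\sum_k d(x_k,x_{k+1})=d$ shows the inequality is tight, whence the sequence is a geodesic and in particular $d(x_1,y)=d-1<d(x,y)$. Since $d(x,x_1)=1$, the second point has the form $x_1=x-e_i+e_j$ for some distinct $i,j$, and by the elementary observation the strict drop of the distance to $y$ forces $x_i>y_i$ and $x_j<y_j$. As $x_1=x-e_i+e_j\in M$, this is exactly the witness required by the definition.

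The converse is where the work lies, and I would prove it by strong induction on $d=d(x,y)$, carried for all pairs in $M$ simultaneously. The case $d=0$ is trivial, and for $d\ge 1$ weak quasi M-convexity supplies indices $i,j$ with $x_i>y_i$, $x_j<y_j$ and either $x-e_i+e_j\in M$ or $y+e_i-e_j\in M$. In the first case the elementary observation gives $d(x-e_i+e_j,y)=d-1$, so the induction hypothesis produces a unit-step sequence in $M$ from $x-e_i+e_j$ to $y$, which I prepend with $x$; in the second case $d(x,y+e_i-e_j)=d-1$ and I splice a geodesic from the $y$-end instead, appending $y$ at the end. The main obstacle is precisely this disjunction: the definition only guarantees that \emph{one} of the two one-step extensions stays inside $M$, so the induction must be able to grow the path from whichever endpoint is admissible, which is the reason for inducting over all pairs rather than fixing one endpoint. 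What remains is the routine check that the spliced sequence has unit consecutive distances and lies entirely in $M$, after which the equivalence follows.
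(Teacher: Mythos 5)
Your proof is correct and takes essentially the same route as the paper's: the forward direction is the same induction on $d$ carried over all pairs in $M$ (the paper compresses your two splicing cases into a single ``w.l.o.g.'' step from the $y$-end, justified by the symmetry of the definition), and the converse direction is the same analysis of the first step $x_1 = x - e_i + e_j$ of a geodesic. One correction to your preliminary fact, though: it is a trichotomy, not a dichotomy. Writing $d(x-e_i+e_j,y) = d(x,y) + \tfrac12(\epsilon_i + \epsilon_j)$, where $\epsilon_i = -1$ if $x_i > y_i$ and $+1$ otherwise, and $\epsilon_j = -1$ if $x_j < y_j$ and $+1$ otherwise, the distance decreases by one precisely when both sign conditions hold, increases by one when neither holds, but stays \emph{unchanged} when exactly one of them holds; so your claim that it equals $d(x,y)+1$ ``otherwise'' is false. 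The slip is harmless here, because every step of your argument invokes only the correct part of the fact, namely ``decrease by one if and only if $x_i > y_i$ and $x_j < y_j$,'' and never the erroneous ``otherwise'' clause.
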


\begin{proof}~
	\begin{itemize}
		\item[``$\Rightarrow$'':] By induction on $d$: Let $x,y \in M$ and $d = d(x,y)$.
		If $d = 0$, then $x = x_0 = y$ satisfies the condition.
		If $d > 0$, \add{there exist $i,j$ such that $x_i > y_i, x_j < y_j$ and $x_{d-1} = y + e_i - e_j \in M$ (or $x_{d-1} = x - e_i + e_j \in M$, in which case interchanging $x$ and $y$ and $i$ and $j$ yields the former formula for $x_{d-1}$) by weakly quasi M-convexity of $M$.}
		Then $d(x_{d-1},y) = 1$ and
		\begin{align*}			
		d(x,x_{d-1}) &= \frac{1}{2}\Bigg(\sum_{k\neq i,j} \vert x_k - y_k \vert + \underbrace{\vert x_i - (y_i + 1)\vert}_{=\vert x_i - y_i\vert - 1} + \underbrace{\vert x_j - (y_j - 1) \vert}_{=\vert x_j - y_j \vert - 1}\Bigg) \\
		&= \frac{1}{2}(\Vert x - y \Vert_1 - 2) = d-1.
		\end{align*}
		By induction hypothesis, there exists a sequence $x_0,x_1,\dots,x_{d-1} \in M$ such that, $x = x_0,x_1,\dots,x_{d-1},x_d = y \in M$ is the sought-after sequence.
		\item[``$\Leftarrow$'':] Let $x,y \in M\add{, x\neq y}$, and $d = d(x,y)$. Let $x_0,x_1,\dots,x_d$ be a sequence as in the lemma. As $d(x,x_1) = 1$, there exist $i,j$ such that $x_1 = x - e_i + e_j$. 
		Furthermore, $x_i > y_i$ and $x_j < y_j$, since
		\begin{align*}
		d-1 &= \sum_{l = 1}^{d-1} d(x_l,x_{l+1}) \\
		&\geq d(x_1,y) = \frac12 \Bigg(\sum_{k \neq i,j} \vert x_k - y_k\vert + \vert x_i-1 - y_i\vert + \vert x_j+1 - y_j\vert\Bigg)
		\end{align*}
		\add{yields a contradiction otherwise.}
	\end{itemize}
	\vspace{-\baselineskip}
\end{proof}

With this, the theorem can be proven as follows.

\begin{proof}[Proof of Theorem \ref{Thm:MainObs}]
	Let \add{$t\in\R$} be minimal such that $A = \{x \in B_r(y) \mid T(x) \leq t\}$ has probability mass $\sum_{x \in A} f(x) \geq 1-\alpha$ and \add{$A \subseteq B_{r-1}(y)$, i.e., $A \cap (B_r(y)\setminus B_{r-1}(y)) = \emptyset$}. Furthermore, fix $a \in A$ such that $T(a) = t$. \add{Recall that the acceptance region $A_{n,\pi}^T(\alpha)$ is the sublevel set \eqref{Eq:AR} at $t_{1-\alpha}$, and note that $t_{1-\alpha} \leq t$ holds as $\P_\pi(T(X) \leq t) \geq \sum_{x \in A} f(x) \geq 1-\alpha$.}
	
	Assume there exists some $b \in A_{n,\pi}^T(\alpha)\setminus A$, i.e., \add{$T(b) \leq t_{1-\alpha} \leq t$. Then $b \notin B_r(y)$ by construction of $A$}. \add{Since} the test statistic $T$ is weakly quasi M-convex\add{,} the sublevel set \add{$L = \{x \in \Deltamn \mid T(x) \leq t\} \supseteq A$} is weakly quasi M-convex. By Lemma \ref{Lemma:NewChar}, there exists a sequence $a = a_0,a_1,\dots,a_d = b \in L$ with $d = d(a,b)$ and $d(a_i,a_{i+1}) = 1$ for all $i = 0,1,\dots,d-1$. By the triangle inequality $d(a_i,y) - 1 \leq d(a_{i+1},y) \leq d(a_i,y) + 1$. Thus, there exists some $j \in \{1,\dots,d-1\}$ such that \add{$d(a_j,y) = r$}, a contradiction (as $T(a_j) \leq t$ \add{but} $a_j \in B_r(y)\setminus B_{r-1}(y)$). Therefore, $A_{n,\pi}^T(\alpha) \subseteq A$, and hence $A = A_{n,\pi}^T(\alpha)$, \add{because $t$ is minimal}.
\end{proof}

\subsection{Calculating a $p$-value}

As described in the previous subsection, an acceptance region can be determined by \add{taking an} arbitrary point and increasing the radius of a ball around this \add{center} point until the acceptance region is found using the criterion provided by Theorem \ref{Thm:MainObs}. Obviously, \add{the center of the ball should lie within the acceptance region,
ideally at its} center, to minimize the necessary iterations and number of points for which to evaluate the pmf and the test statistic. The expected value $\E X = n\cdot p$ of the multinomial distribution, which is the center of mass of all probability weighted points in the discrete simplex, is known, and \add{it is} close to the center of mass of the acceptance region, as the \add{region} contains most of the mass. Therefore, a \add{point} close to the expected value \add{is a suitable center for the ball}.

The $p$-value of an observation $x$ can be found by \add{computing} the total probability of the largest acceptance region not containing the observation. \add{However}, this region can be large if the $p$-value of the observation is very small. To avoid this, Algorithm \ref{Algo:P-val} does not \add{compute} very small $p$-values precisely, but only determines precise $p$-values above a certain threshold $\theta$ and otherwise states that the $p$-value is smaller than the threshold $\theta$. Figure \ref{Fig:Algo} \add{shows} the points evaluated by Algorithm \ref{Algo:P-val} for \add{an observation} with $p$-value greater, respectively smaller than some threshold $\theta$.


\begin{algorithm}[t]
	\DontPrintSemicolon
	\caption{\add{Compute} exact $p$-value above some threshold.}
	\label{Algo:P-val}
	\SetAlgoLined
	\KwIn{Observation $x \in \Deltamn$, hypothesis $\pi \in \Deltam$, threshold $0 < \theta \ll 1$}
	\KwOut{Exact $p$-value $p \in [\theta,1]$ or $0$ if the $p$-value is less than $\theta$}
	\add{compute} $y \in \Deltamn$ minimizing $d(y,\E_\pi X)$ \\
	\lIf{$T(x) \leq T(y)$}{set $y = x$} 
	initialize \add{$r = 0$}, $\text{SumProb} = 0$ \\
	\Repeat{$T(x) \leq \min \{T(z)\mid d(y,z) = r\}$ or \textnormal{SumProb} $> 1-\theta$}{
	add $f(z)$ to SumProb for points \add{$z \in B_r(y)\setminus B_{r-1}(y)$} with $T(z) < T(x)$ \\
	increment $r = r+1$}
	\lIf{$\textnormal{SumProb} \leq 1-\theta$}{\Return $1- \text{SumProb}$}
	\lElse{\Return $0$}
\end{algorithm}

\begin{figure}[t]\centering
	\includegraphics[scale = 0.5]{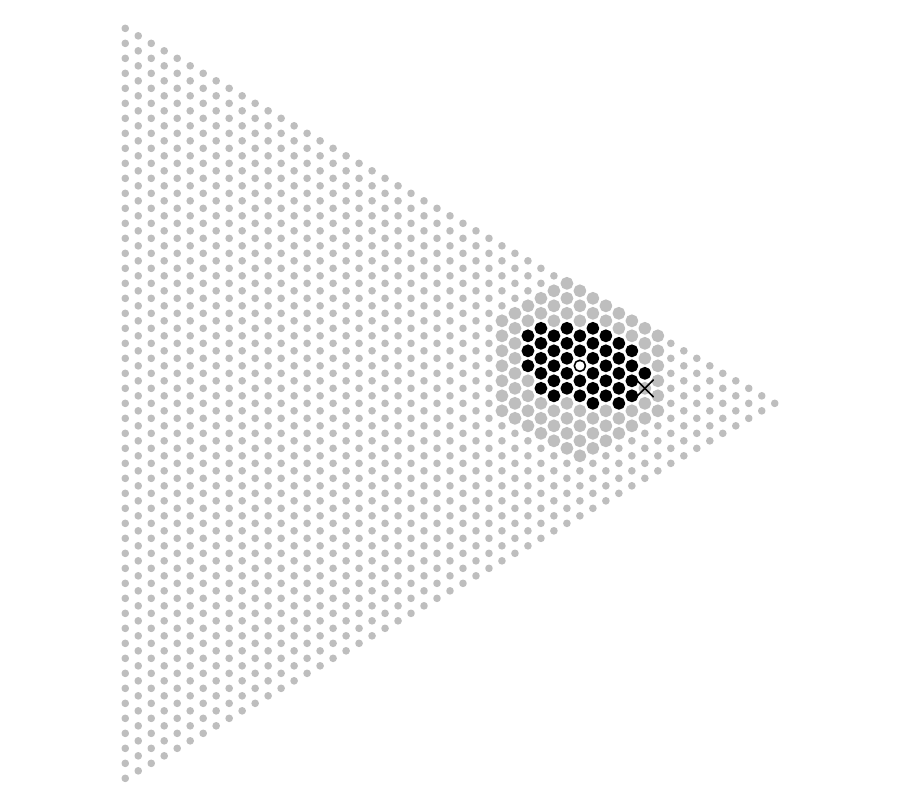}
	\includegraphics[scale = 0.5]{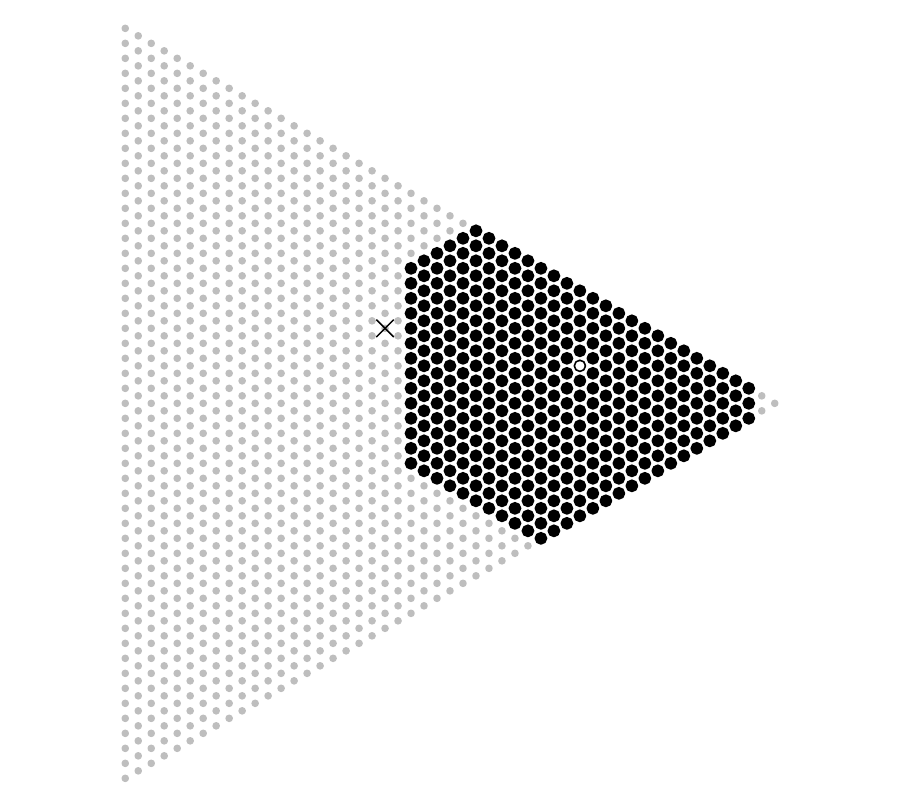}
	\caption{\add{Points (big dots)} in \add{$\Omega_{3,50}$} for which the probability mass and test statistic are evaluated given the marked observations $x = (4,40,6)$ (left) and $x = (10,20,20)$ (right) under the null hypothesis $\add{\pi} = (\frac{1}{10},\frac{7}{10},\frac{2}{10})$ and $T = T^\P$. The $p$-values are 0.3049 
	(left) and less than $\theta = 0.0001$ (right). The \add{black} region on the left \add{is} the smallest acceptance region not containing the \add{observation $x$}. }
	\label{Fig:Algo}
\end{figure}

\subsection{Implementation}
\label{Sec:Implementation}

Enumeration of the full sample space can be implemented using a simple recursion. A similar, more complicated recursive scheme can be employed to enumerate the samples at a given radius $r$ in the repeat-loop of Algorithm \ref{Algo:P-val}. This is implemented in the R package \texttt{ExactMultinom} using a C++ subroutine to allow for fast recursions.

As mentioned in the \add{introduction}, algorithms for \add{computing} exact multinomial tests superior to the full enumeration method have been proposed in the literature. However, 
\add{none of these methods have been tailored to the probability mass test, and most of them do not produce ``strictly exact'' $p$-values \citep{KN06}. Appendix \ref{App:Comparison} provides a short overview of and comparison with other methods, which shows that Algorithm \ref{Algo:P-val} performs favorably in the setting of the simulation study of Section \ref{Sec:4}.}
There are two packages implementing exact multinomial tests using full enumeration of the sample space in R, namely, \texttt{EMT} \citep{EMT} and \texttt{XNomial} \citep{XNomial}. Whereas \texttt{EMT} is written purely in R, the function \texttt{xmulti} of the \texttt{XNomial} package implements the full enumeration method using an efficient C++ subroutine for the recursion, which makes it a lot more efficient. 
\add{T}he implementation of Algorithm \ref{Algo:P-val} \add{simultaneously computes} $p$-values for the chi-square, log-likelihood ratio and probability mass test statistics, as \add{does} \texttt{xmulti}, and so comparability is ensured.

The current implementation of Algorithm \ref{Algo:P-val} accurately finds $p$-values of order roughly as small as $10^{-10}$. Smaller $p$-values will often lead to negative output because of limited computational precision in the addition of many floating point numbers. To ensure accurate results, I recommend to choose $\theta$ no less than $10^{-8}$ with the current implementation.

During early runs of the simulation study described in Section \ref{Sec:4}, it was noticed that the runtime of Algorithm \ref{Algo:P-val} tends to increase drastically if the null distribution contains \add{a} very small probability \add{$\pi_i \ll n^{-1}$ for some $i \leq m$}. 
\add{In this case, the acceptance region is very flat, containing mostly points within a lower dimensional face of the discrete simplex, as hits in category $i$ are improbable under the null. Hence, the asymptotic advantage of Algorithm \ref{Algo:P-val} discussed in the next subsection requires large sample size $n$ to take effect under sparse null hypotheses.}
As a heuristic, which turned out to be an effective remedy, the implementation does not enumerate entire balls if $n\cdot\pi_i < \frac12$, but only considers points $z\in\Deltamn$ with small $z_i$, by skipping all points $z$ for which $\P_\pi(X_i \geq z_i) < \theta\cdot 10^{-8}$.

\subsection{Runtime complexity}

\begin{figure}[t]\centering
	\includegraphics[scale = 0.5]{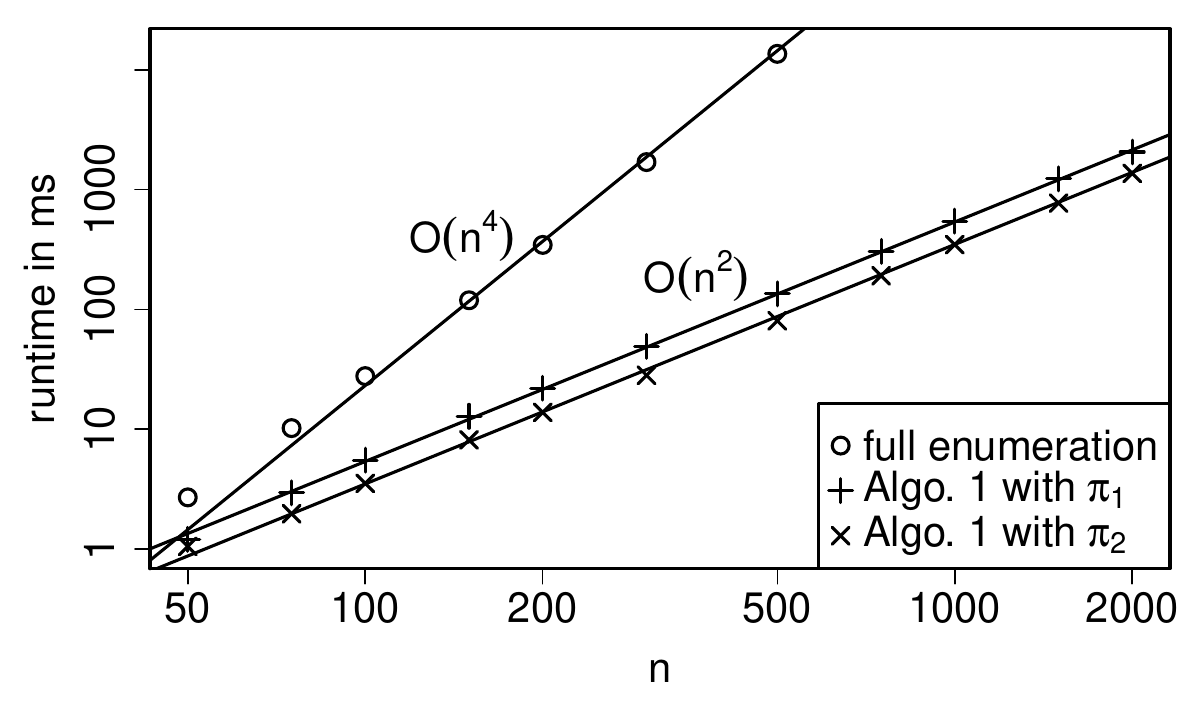}
	\caption{Runtime of the full enumeration method and Algorithm \ref{Algo:P-val} when enumerating a ball with probability mass $1-\theta$ for $\theta = 0.0001$ and null hypotheses $\pi_1 = (0.2,0.2,0.2,0.2,0.2)$ \add{and} $\pi_2 = (0.01,0.19,0.2,0.3,0.3)$. If the $p$-values of an observation are significantly larger than $\theta$, the runtime of Algorithm \ref{Algo:P-val} considerably decreases. Times are mean values from 10 runs.}
	\label{Fig:Time5}
\end{figure} 

The discrete simplex $\Deltamn$ contains $\vert\Deltamn\vert = \binom{n+m-1}{m-1}$ points, and so the full enumeration takes $\mathcal{O}(n^{m-1})$ operations to compute a $p$-value. In comparison, the acceptance regions at a fixed level $\alpha > 0$ only contain $\mathcal{O}(n^\frac{m-1}2)$ points, and this continues to hold for the smallest ball centered at the expected value containing the acceptance region, as proven by Proposition \ref{Prop:GrowthAR} below. Therefore, Algorithm \ref{Algo:P-val} only takes $\mathcal{O}(n^\frac{m-1}2)$ operations to determine a $p$-value above the threshold $\theta$. Figure \ref{Fig:Time5} shows runtime as a function of $n$ for $m = 5$. Whereas the runtime of the full enumeration method does \add{neither} depend on the choice of $\pi$ \add{nor on} the observation $x$, the runtime of Algorithm \ref{Algo:P-val} increases if the $p$-value of $x$ is small. Furthermore, the choice of $\pi$ also influences the runtime of \add{the implementation} with the uniform null hypothesis resulting in a longer runtime than sparse null hypotheses \add{(when applying the heuristic described at the end of Section \ref{Sec:Implementation})}. This is further investigated in the simulation study in Section \ref{Sec:4}. As the runtime increases exponentially in $m$, Algorithm \ref{Algo:P-val} is only feasible if the number of categories $m$ is small.

\begin{Prop} \add{Let $T \in \{T^{\chi^2},T^{G},T^\P\}$, $\alpha \in (0,1)$ and $\pi \in \Deltam$. Then there exists $c = c(\alpha,\pi)$} such that $A^T(\alpha) \subset B_{\sqrt{n}c}(n\pi)$ for sufficiently large $n$.
	\label{Prop:GrowthAR}
\end{Prop}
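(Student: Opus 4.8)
The plan is to prove, separately for each $T\in\{T^{\chi^2},T^{G},T^\P\}$, a purely deterministic inequality bounding the Manhattan distance $d(x,n\pi)$ by $\tfrac{\sqrt n}{2}\sqrt{T(x)}$ up to lower–order corrections, and then to combine it with the observation that the threshold defining the acceptance region stays bounded as $n\to\infty$. The threshold is bounded because each of the three statistics converges in distribution to $\chi^2_{m-1}$ under the null — for $T^{\chi^2}=T^1$ and $T^G=T^0$ by the asymptotics for power divergence statistics recalled in Section \ref{Sec:2}, and for $T^\P$ by Theorem \ref{Thm:AsApprox}. Since the limiting c.d.f.\ is continuous and strictly increasing on its support, the $(1-\alpha)$-quantiles $t_{1-\alpha}$ converge to the finite $\chi^2_{m-1}$ quantile and are therefore dominated by a constant $C$ for all large $n$. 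Writing $A^T(\alpha)=\{x\in\Deltamn:T(x)\le t_{1-\alpha}\}$, it then suffices to show that $T(x)\le C$ forces $d(x,n\pi)\le c\sqrt n$.

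For $T^{\chi^2}$ I would apply the Cauchy–Schwarz inequality with weights $n\pi_j$, which gives
\[
d(x,n\pi)=\tfrac12\sum_j\sqrt{n\pi_j}\,\frac{|x_j-n\pi_j|}{\sqrt{n\pi_j}}\le\tfrac12\Big(\textstyle\sum_j n\pi_j\Big)^{1/2}\big(T^{\chi^2}(x)\big)^{1/2}=\tfrac{\sqrt n}{2}\sqrt{T^{\chi^2}(x)}.
\]
For $T^G$, writing $T^G(x)=2n\,D_{\mathrm{KL}}(x/n\Vert\pi)$ and noting $\Vert x/n-\pi\Vert_1=\tfrac2n d(x,n\pi)$, Pinsker's inequality yields $T^G(x)\ge\tfrac4n\,d(x,n\pi)^2$, i.e.\ the identical bound $d(x,n\pi)\le\tfrac{\sqrt n}{2}\sqrt{T^G(x)}$. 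In both cases $T(x)\le C$ immediately gives $d(x,n\pi)\le\tfrac{\sqrt n}{2}\sqrt C$.

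The probability mass statistic is the crux. I would start from the exact identity
\[
T^\P(x)-T^G(x)=2\log g(\pi)-2\log g(x/n),\qquad g(p):=\bar f_{n,p}(np),
\]
and use that $g(x/n)=f_{n,x/n}(x)\le 1$ is a genuine multinomial probability, giving the crude bound $T^\P(x)\ge T^G(x)+2\log g(\pi)$. This alone is insufficient, since $2\log g(\pi)=-(m-1)\log(2\pi n)+O(1)$ and would only yield a radius of order $\sqrt{n\log n}$. To remove the logarithmic loss I would split on the sizes of the counts, with $\pi_{\min}:=\min_j\pi_j$. If some $x_{j_0}<\tfrac12 n\pi_{j_0}$, then $d(x,n\pi)\ge\tfrac14 n\pi_{\min}$, so the $T^G$-bound forces $T^G(x)$, and hence $T^\P(x)\ge T^G(x)+2\log g(\pi)$, to be of order $n$, which exceeds $C$ for large $n$; such $x$ therefore lie outside $A^T(\alpha)$. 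If instead $x_j\ge\tfrac12 n\pi_j$ for all $j$, all counts are of order $n$ and a Stirling expansion of the two log-Gamma terms in $\log g(\pi)-\log g(x/n)$ applies uniformly; the decisive feature is that the $(x_j-n\pi_j)(\log n-1)$ contributions cancel because $\sum_j(x_j-n\pi_j)=0$, leaving $T^\P(x)-T^G(x)=\sum_j\log\tfrac{x_j}{n\pi_j}+O(1/n)\ge -m\log2+O(1/n)$, in agreement with Lemma \ref{Lemma:DiffLLRProb}. Hence $T^\P(x)\le C$ gives $T^G(x)\le C+m\log2+o(1)$, and the $T^G$-bound again delivers $d(x,n\pi)\le c\sqrt n$.

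The main obstacle is precisely this last case analysis for $T^\P$: the naive comparison with $T^G$ loses a factor $\sqrt{\log n}$, and recovering a constant radius requires both the uniform Stirling estimate on the ``bulk'' where every count is $\Theta(n)$ and the crude domination argument on the complementary ``boundary'' region. I would be careful that the Stirling error is genuinely uniform over the bulk — it is $O(1/(n\pi_{\min}))$ — and that the cancellation of the $(x_j-n\pi_j)(\log n-1)$ terms is what eliminates the spurious logarithm. Assembling the constants arising in the three statistics then produces a single $c$ valid for all $T\in\{T^{\chi^2},T^{G},T^\P\}$ and all sufficiently large $n$.
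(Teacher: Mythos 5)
Your proof is correct, but its core mechanism differs genuinely from the paper's. The shared step is the reduction via quantiles: both arguments use convergence in distribution to $\chi^2_{m-1}$ (Theorem \ref{Thm:AsApprox} for $T^\P$, the power-divergence asymptotics recalled in Section \ref{Sec:2} for the other two) to bound the quantiles $t_{n,1-\alpha}$ by a constant $C$, so that it suffices to show that $T(x)\le C$ forces $d(x,n\pi)\le c\sqrt n$. The paper then argues geometrically: it passes to the continuous extension $\bar T$ on $\barDeltamn$, proves convexity (Lemma \ref{Lemma:Convex}), parametrizes the sphere of radius $\sqrt{nn_0}r_0$ around $n\pi$ so that $\bar T^{\chi^2}$ becomes independent of $n$ there, and shows that $\bar T^G$ and $\bar T^\P$ converge uniformly to $\bar T^{\chi^2}$ on that sphere (Lemma \ref{Lemma:UnifConv}); convexity of the sublevel sets then confines $A^T(\alpha)$ to the ball once the spherical minimum exceeds $C$. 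You avoid convexity entirely by proving pointwise, non-asymptotic inequalities valid on all of $\Deltamn$: Cauchy--Schwarz for $T^{\chi^2}$ and Pinsker for $T^G$ both give $d(x,n\pi)\le\frac{\sqrt n}{2}\sqrt{T(x)}$, and $T^\P$ is compared to $T^G$ via the expansion of Lemma \ref{Lemma:DiffLLRProb}. Your case split is exactly the price of forgoing convexity: the paper only ever evaluates the statistics on a $\sqrt n$-sphere, where every coordinate is $n\pi_j(1+o(1))$ and the $T^\P$-versus-$T^G$ gap vanishes, whereas your global argument must also handle points with some $x_j$ far below $n\pi_j$, where the crude bound $T^\P(x)\ge T^G(x)+2\log\bar f_{n,\pi}(n\pi)$ costs a term of order $(m-1)\log n$ and would only yield a radius of order $\sqrt{n\log n}$; discarding those points because Pinsker forces $T^G(x)$ to be of order $n$ there is a correct and clean fix, and your uniformity checks (the Stirling error $O(1/(n\pi_{\min}))$ on the bulk, a lower bound tending to infinity uniformly on the complement) are the right ones. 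Your route buys explicit constants (any $c>\frac12\sqrt{C+m\log 2}$ eventually works for all three statistics simultaneously) and replaces both appendix lemmas by standard information-theoretic inequalities plus Lemma \ref{Lemma:DiffLLRProb}; the paper's route is more modular --- it needs nothing about a statistic beyond convexity of its continuous extension and uniform approximation by $\bar T^{\chi^2}$ on $\sqrt n$-spheres --- and therefore transfers more readily to further statistics, such as other members of the power divergence family.
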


\begin{proof}
	Consider the canonical extension $\bar{T}$ of $T$ to $\barDeltamn = \{x \in \R_{\geq 0}^m \mid x_1+\ldots+x_m = n\}$ and let $B_r(n\pi) = \{x \in \barDeltamn\mid \frac12\Vert x-n\pi\Vert_1 \leq r\}$ a ball in $\barDeltamn$ with boundary $\partial B_r(n\pi) = \{x \in \barDeltamn\mid \frac12\Vert x-n\pi\Vert_1 = r\}$. 
	Let $r_0 = \min_j \pi_j > 0$ and $n_0 \in \N$. If $n \geq n_0$, then every $x \in \partial B_{\sqrt{nn_0}r_0}(n\pi)$ can be written as $x = x(n,x_0) := n\pi + \sqrt{nn_0}(x_0 - \pi)$ for some $x_0 \in \partial B_{r_0}(\pi)$. 
	
	Let $(t_{n,1-\alpha})$ be the sequence of \add{lowest} $(1-\alpha)$-quantiles of $T_n = T(X_n), X_n\sim \mathcal{M}_m(n,\pi)$ for $n \in \N$. As $T_n$ converges to $\chi^2_{m-1}$ in distribution, the sequence of quantiles converges to the $(1-\alpha)$-quantile $\chi^2_{m-1,1-\alpha}$ \citep[cf.][Lemma 21.2]{VV98}. Consequently, the maximum $t = \max_n t_{n,1-\alpha}$ exists, and the set $A_n = \{x \in \barDeltamn \mid \bar{T}(x) \leq t\}$ contains the acceptance region \add{$A_{n,\pi}^T(\alpha)$} for every $n$.
	
	As $\bar{T}$ is convex (\add{by} Lemma \ref{Lemma:Convex} in Appendix \ref{App:C}) and thus has convex sublevel sets, it suffices to show that $n_0$ can be chosen such that $\min_{x \in \partial B_{\sqrt{nn_0}r_0}(n\pi)} \bar{T}(x)$ converges to a value $>t$ to ensure that $\add{A_{n,\pi}^T(\alpha)} \subset A_n \subset B_{\sqrt{n}(\sqrt{n_0}r_0)}(n\pi)$ for sufficiently large $n$.
	
	In case $T = T^{\chi^2}$, observe that
	\begin{align*}
	\bar{T}(x(n,x_0)) &= 
	\sum_j \frac{(x_j(n,x_0) - n\pi_j)^2}{n\pi_j} 
	= \sum_j \frac{n_0(x_{0,j} - \pi_j)^2}{\pi_j}
	\end{align*}
	does not depend on $n$, and so the canonical extension $\bar{T}$ of the chi-square statistic at radius $\sqrt{nn_0}r_0$ is bounded from below by $b(n_0) = \min_{x \in \partial B_{r_0}(n_0\pi)} \bar{T}(x)$. This bound becomes arbitrarily large as $n_0$ is increased.
	
	In case $T = T^{G}$ or $T = T^\P$, if $n_0$ is fixed, $\bar{T}(x(n,x_0))$ converges uniformly to $\bar{T}^{\chi^2}(x(n,x_0))$ for $x_0 \in \partial B_{r_0}(\pi)$ (\add{by} Lemma \ref{Lemma:UnifConv} in Appendix \ref{App:C}). \add{Therefore,} $\min_{x \in \partial B_{\sqrt{nn_0}r_0}(n\pi)} \bar{T}(x)$ converges to $b(n_0)$.
\end{proof}

\section{Application}\label{Sec:4}

In this section, the use of the new method is illustrated in a simulation study. On the one hand, this serves to show the improvements in runtime in comparison to the full enumeration method. On the other hand, this sheds some light on the fit of the asymptotic approximation to the probability mass test provided by Theorem \ref{Thm:AsApprox} for a \add{moderate} sample size ($n = 100$).
As a practical application \add{in forecast evaluation}, the usage of exact multinomial tests to increase the information conveyed by the \emph{calibration simplex} \citep{Wil13}, a graphical tool used to assess ternary probability forecasts, is outlined.

\subsection{Simulation study}

For the simulation study, pairs $(\pi^{(1)},x^{(1)}),\dots,(\pi^{(N)},x^{(N)})$ of null hypothesis parameters and samples were generated as i.i.d.\ realizations of the random quantity $(P,X)$ with $P \sim \mathcal{U}(\Deltam)$ being uniformly distributed on the unit simplex and $X \mid P \sim \mathcal{M}_m(n,P)$. 
\add{For} each pair, $p$-values were computed using various test statistics and algorithms. \add{Thereby}, no specific null hypothesis \add{had} to be chosen and instead a wide variety \add{was} considered. By drawing samples from the null hypotheses, $p$-values follow a uniform distribution on $[0,1]$. Various aspects of the tests and algorithms in question can be examined using the resulting rich data set and subsets thereof. 

The following results were obtained using $N = 10^6$ such pairs with samples of size $n = 100$ drawn from multinomial distributions with $m = 5$ \add{categories}. Exact $p$-values were computed using the implementation of Algorithm \ref{Algo:P-val} provided by the accompanying R package\add{.} To estimate the speedup achieved by the new method in this study, the full enumeration method provided by the \texttt{xmulti} function of the \texttt{XNomial} package \citep{XNomial} was applied to the first $10^4$ pairs. Essentially, the computational cost of the full enumeration is constant, independent of the null hypothesis at hand and the resulting $p$-value, whereas the cost of Algorithm \ref{Algo:P-val} increases as the $p$-value decreases and also varies with the null hypothesis.

\begin{figure}\centering
	\includegraphics[scale = 0.5]{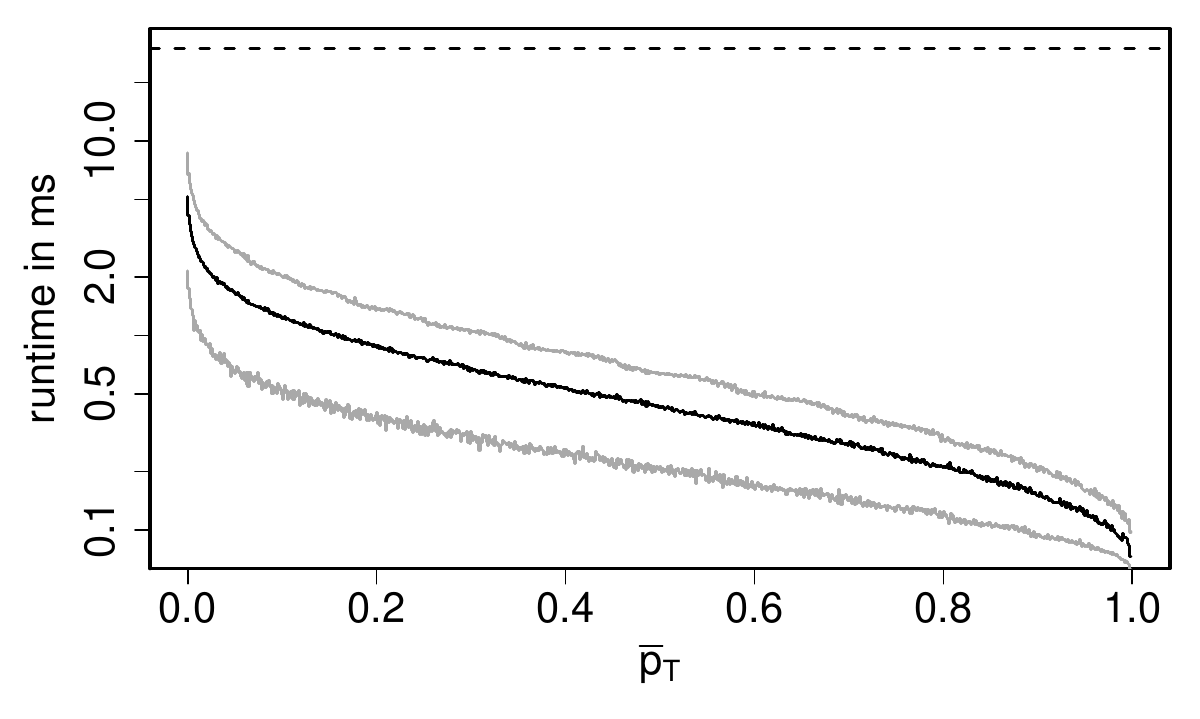}
	\caption{Runtime against mean $p$-value in groups of 1000 samples with similar mean $p$-value. The \add{black} line shows mean runtime per group, whereas the \add{grey} lines are the 5\% and 95\%-quantile. The \add{dashed} line shows the mean runtime using full enumeration.}
	\label{Fig:PValTime}
\end{figure}

The implementation of Algorithm \ref{Algo:P-val} took an average of 0.59 ms 
to \add{compute} a $p$-value, whereas the full enumeration took 29.76 ms 
on average, and so execution of the new method was about 50 times as fast. Perhaps surprisingly, Monte Carlo estimation (using \texttt{xmonte} from \texttt{XNomial}, which simulates 10000 samples by default) took almost twice as long (53.49 ms) 
as the full enumeration. Figure \ref{Fig:PValTime} illustrates the connection between runtime and size of the resulting $p$-values for the new method. As there are other factors influencing the runtime and \add{the} implementation computes $p$-values for multiple statistics simultaneously, samples were ordered by their mean $p$-value $\bar{p}_T = \frac{1}{3}(p_{T^\P} + p_{T^{\chi^2}} + p_{T^{G}})$ and \add{put} in groups of 1000 samples \add{with} similar mean $p$-value (in particular, \add{the groups contain} samples \add{with $p$-values} in between the empirical \add{$(\frac{a}{1000})$- and $(\frac{a+1}{1000})$-quantile for $a = 0,\dots,999$}). The figure shows mean runtime in each group as well as the 5\%- and 95\%-quantile. 

\begin{figure}\centering
	\includegraphics[scale = 0.5]{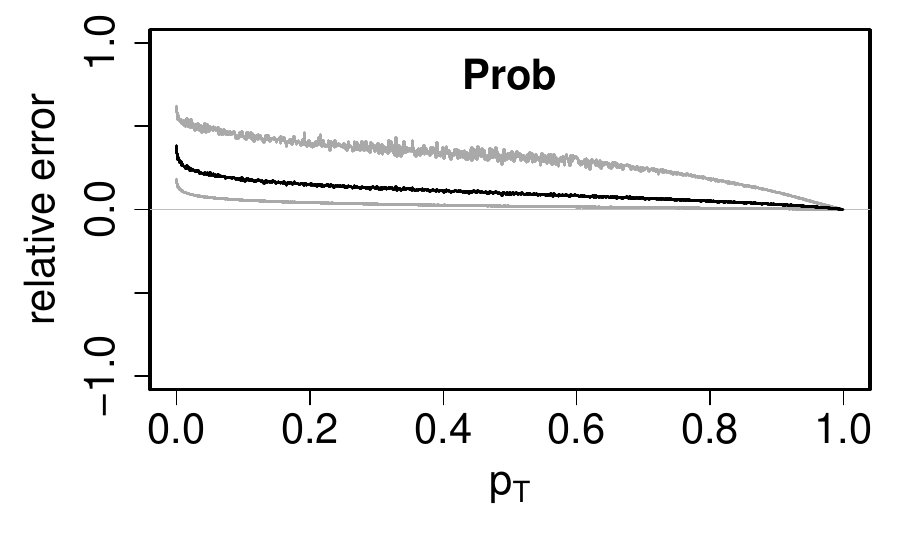}
	\includegraphics[scale = 0.5]{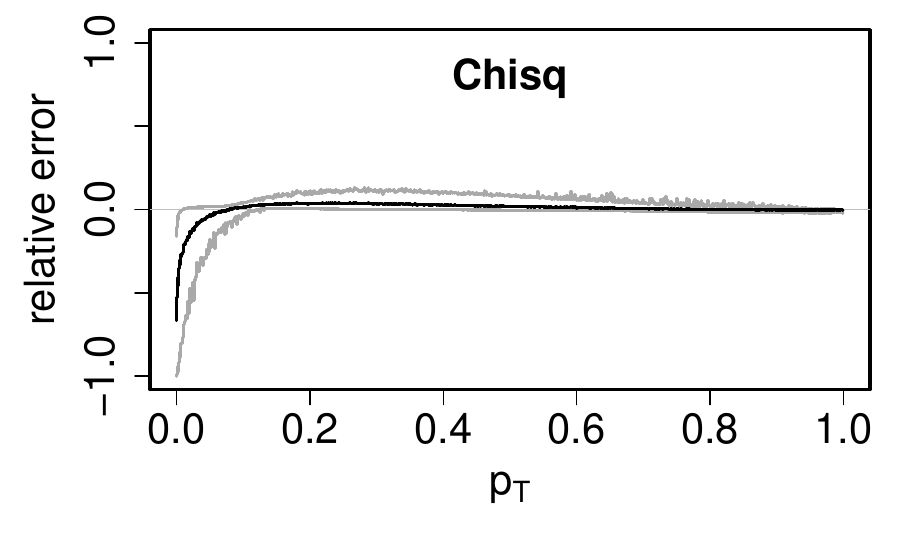}
	\includegraphics[scale = 0.5]{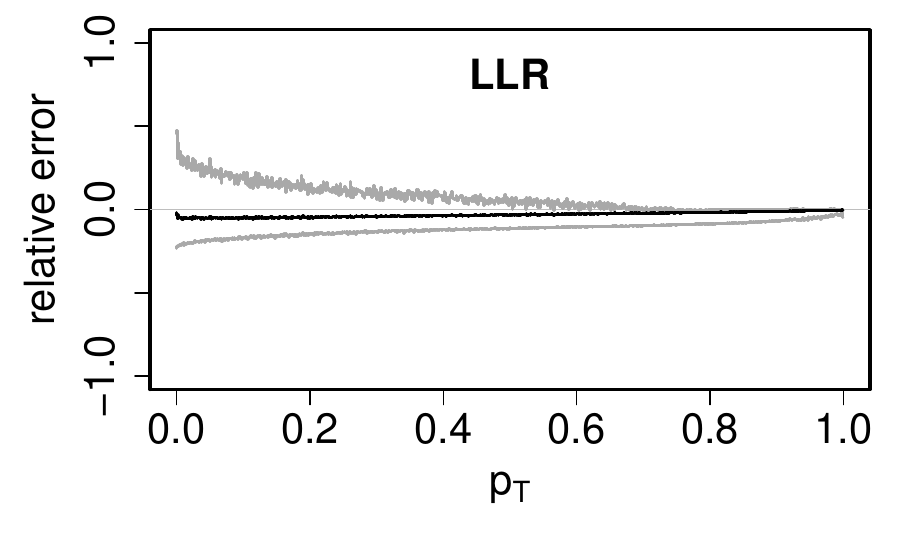}
	\caption{Relative errors of asymptotic approximation for probability mass (Prob), chi-square (Chisq) and log-likelihood ratio (LLR) test statistic. The plots were obtained using the same grouping scheme as in Figure \ref{Fig:PValTime}.}
	\label{Fig:AsError}
	
	\vspace*{\floatsep}
	
	\includegraphics[scale = 0.5]{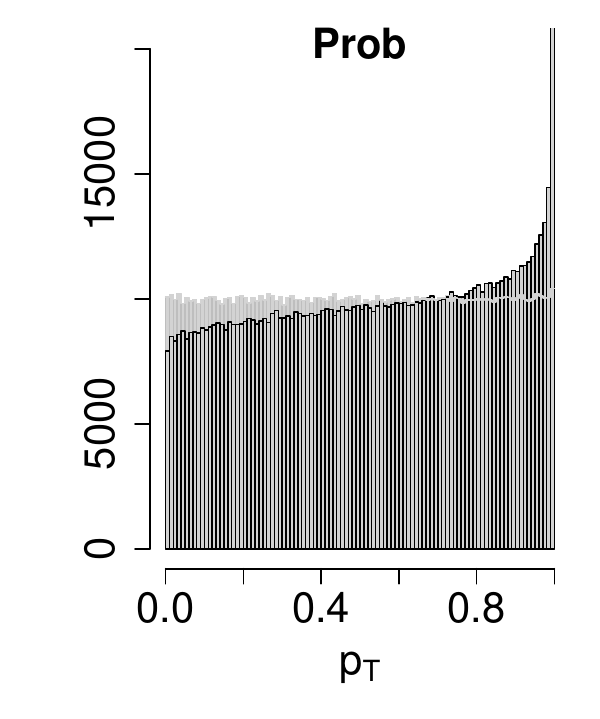}
	\includegraphics[scale = 0.5]{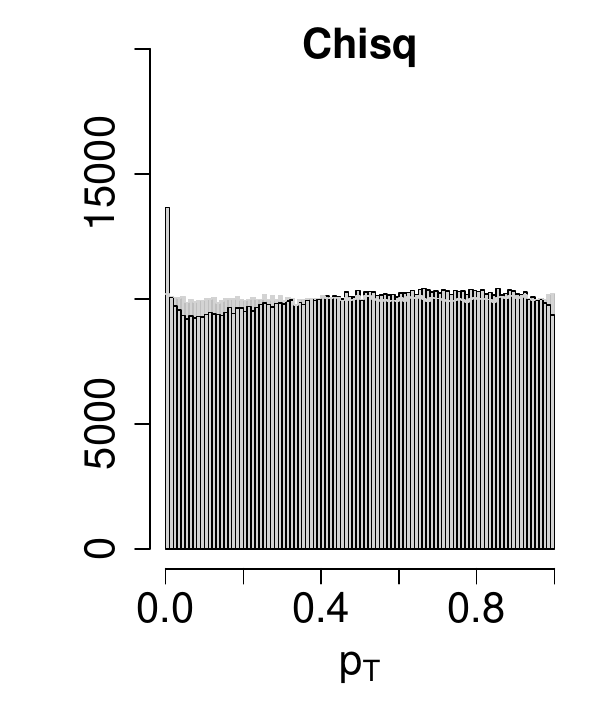}
	\includegraphics[scale = 0.5]{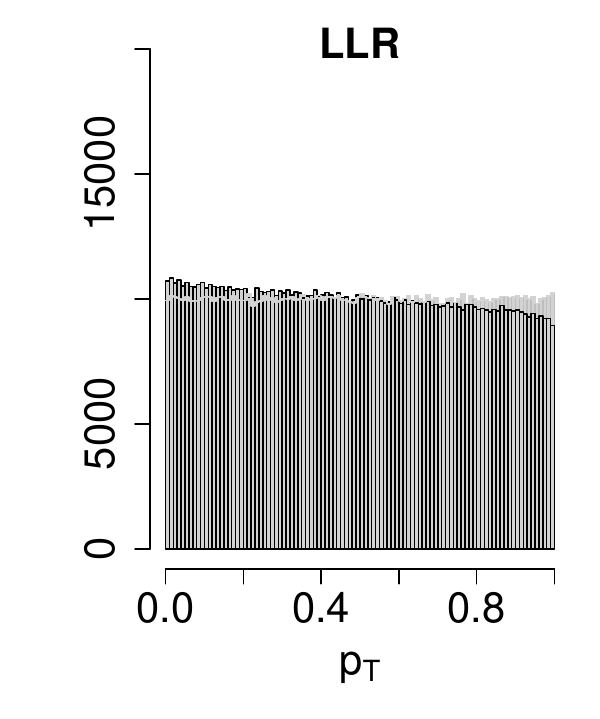}
	\caption{Histograms of asymptotic approximations to $p$-value\add{s} for probability mass (Prob), chi-square (Chisq) and log-likelihood ratio (LLR) test statistic in black. The \add{gray histograms show} respective exact $p$-values. The rightmost bar within the left histogram is not fully shown and extends further up to over 30000 counts.} 
	\label{Fig:HistPvals}
\end{figure}

\begin{figure}[t]\centering
	\includegraphics[scale = 0.5]{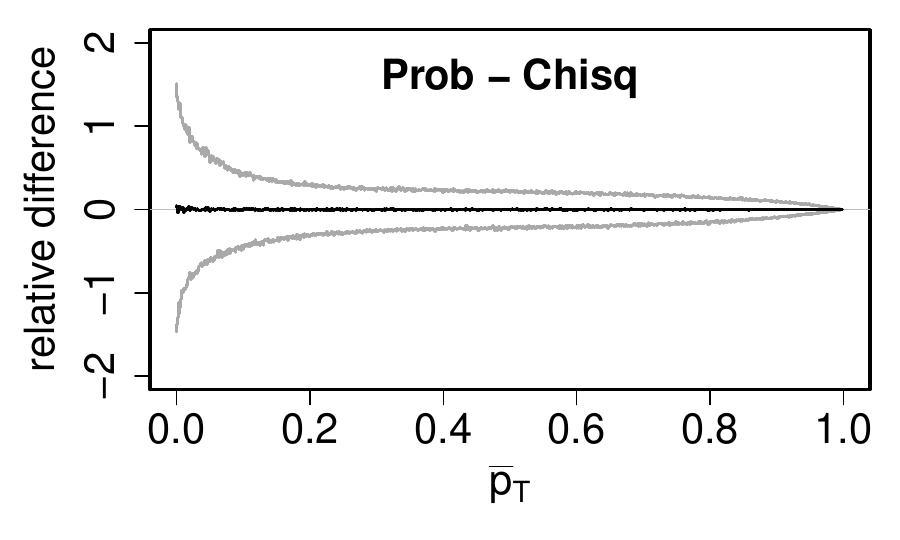}
	\includegraphics[scale = 0.5]{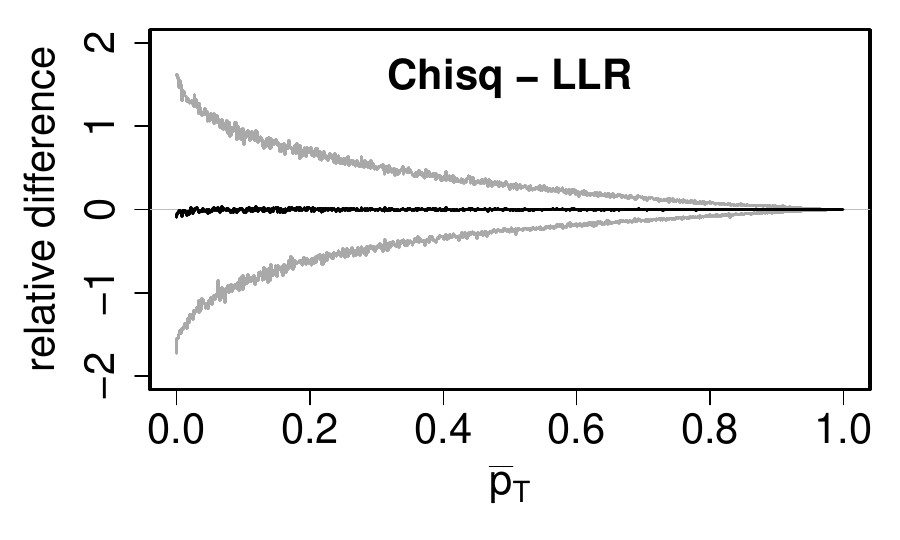}
	\includegraphics[scale = 0.5]{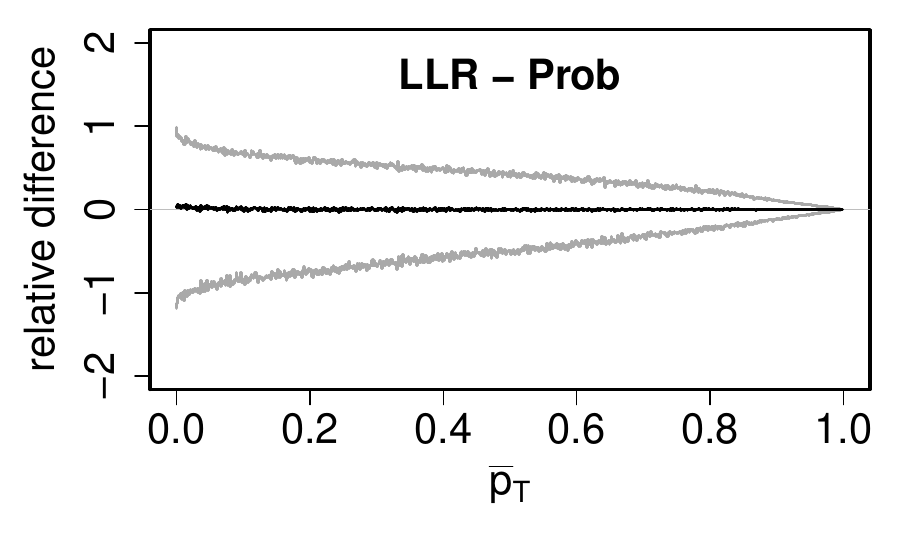}
	\caption{Relative differences between exact $p$-values of probability mass (Prob), chi-square (Chisq) and log-likelihood ratio (LLR) test statistic against mean of compared $p$-values. The plots were obtained using the same grouping scheme as in Figure \ref{Fig:PValTime}.}
	\label{Fig:RelDiff}
\end{figure}

To illustrate the fit of the classical chi-square approximation, the probability of a chi-square distribution with $m-1$ degrees of freedom exceeding the values of the test statistics for each pair were computed.
Figure \ref{Fig:AsError} shows relative errors of the asymptotic approximations to the $p$-values for the three test statistics. Given a test statistic $T$ and asymptotic approximation $\tilde p_T = \tilde p_T(x,\pi)$ to the exact $p$-value $p_T = p_T(x,\pi)$, the relative error \add{is the deviation from the exact value in parts of said value, $\frac{\tilde p_T - p_T}{p_T}$}. \add{T}he asymptotic approximation to the chi-square statistic is quite accurate in most cases, but tends to underestimate small $p$-values ($< 0.1$). The asymptotic approximation to the log-likelihood ratio statistic tends to slightly underestimate $p$-values on average. \add{While the exact $p$-values are \emph{valid} in that $\P_\pi(p_T(X,\pi) \leq \alpha) \leq \alpha$ for all $\alpha \in [0,1]$, underestimation may result in invalid $p$-values.} Asymptotic approximations of Pearson's chi-square and the log-likelihood ratio have been studied well, and the classical chi-square approximations can be improved by using moment corrections \citep[see][and references therein]{CR89}. Furthermore, the errors typically increase if some category has small expectation under the null hypothesis. \add{T}he approximation to the probability mass $p$-values provided by Theorem \ref{Thm:AsApprox} produces somewhat larger errors especially for large $p$-values, and \add{it} clearly overestimates the $p$-values. This is emphasized by the fact that within the simulation data only a vanishingly small number of $p$-values was slightly underestimated, all of which were well over 0.9. Figure \ref{Fig:HistPvals} illustrates how \add{estimation} errors influence the distribution of the resulting $p$-values. Whereas the exact $p$-values \add{appear to} follow a uniform distribution, \add{the} asymptotic $p$-values clearly deviate from uniformity. For the probability mass statistic, the asymptotic test \add{yields} a conservative test, whereas the asymptotic log-likelihood ratio test (and also the asymptotic chi-square test at small significance levels) is slightly anti-conservative.

\begin{table}
	\caption{Exact $p$-values $p_T$ and asymptotic $p$-values $\tilde p_T$ of five randomly selected pairs $(x,\pi)$ with $0.01 < p_{T^{G}}(x,\pi) < 0.1$.}
	\centering
	\begin{tabular}{c|cc|cc|cc}
		$\pi$ & $p_{T^\P}$ & $\tilde p_{T^\P}$ & $p_{T^{\chi^2}}$ & $\tilde p_{T^{\chi^2}}$ & $p_{T^{G}}$ & $\tilde p_{T^{G}}$ \\ 
		\hline
		$(0.116 , 0.225 , 0.259 , 0.002 , 0.398)$ & 0.0068 & 0.0092 & 0.0190 & 0.0073 & 0.0126 & 0.0172 \\
		$(0.038 , 0.079 , 0.224 , 0.387 , 0.272)$ & 0.1150 & 0.1268 & 0.1437 & 0.1469 & 0.0361 & 0.0307 \\
		$(0.595 , 0.129 , 0.093 , 0.064 , 0.118)$ & 0.0447 & 0.0495 & 0.0477 & 0.0482 & 0.0719 & 0.0665 \\
		$(0.497 , 0.217 , 0.223 , 0.057 , 0.007)$ & 0.0761 & 0.0994 & 0.0803 & 0.0741 & 0.0461 & 0.0498 \\
		$(0.243 , 0.022 , 0.237 , 0.373 , 0.125)$ & 0.0474 & 0.0566 & 0.0508 & 0.0507 & 0.0628 & 0.0568 \\
	\end{tabular}
	\label{Tab:RandValues}
\end{table}

\add{F}igure \ref{Fig:RelDiff} shows relative differences between exact $p$-values obtained with the three test statistics. Given test statistics $T$ and $T'$, the relative difference between $p$-values $p_T = p_T(x,\pi)$ and $p_{T'} = p_T(x,\pi)$ \add{is} $\frac{p_T - p_{T'}}{\bar{p}_T}$\add{, where} $\bar{p}_T = \frac{p_T + p_{T'}}2$. It can be seen that the choice of test statistic can make quite a difference. A closer look at the simulation data revealed that these differences tend to be smaller if expectations for all categories are large under the null. To provide some numerical insights, Table \ref{Tab:RandValues} lists exact and asymptotic $p$-values.

\subsection{The calibration simplex}

Turning to an application in forecast verification, consider a random variable $X$ and a probabilistic forecast $F$ for $X$. For an introduction to probabilistic forecasting in general, see \citet{GK14}. A probabilistic forecast is said to be \emph{calibrated} if the conditional distribution of the quantity of interest given a forecast coincides with the forecast distribution, that is, 
\begin{equation}
X\mid F \sim F
\label{Eq:Cal}
\end{equation}
holds almost surely. Suppose now that $X$ maps to one of three distinct outcomes only. Then, a probabilistic forecast is fully described by the probabilities it assigns to each outcome. 
In this case, the calibration simplex \citep{Wil13} can be used to graphically identify discrepancies \add{between} predicted probabilities and conditional outcome frequencies. Given i.i.d.\ realizations 
$(f_1,x_1),\dots,(f_N,x_N)$
consisting of forecast probabilities (vectors within the unit 2-simplex) and observed outcomes encoded 1, 2 and 3, forecast-outcome pairs with similar forecast probabilities are grouped according to a tessellation of the probability simplex. Thereafter, calibration is assessed by comparing average forecast and actual outcome frequencies within each group.

\begin{figure}[t]\centering
	\includegraphics[scale = 0.5,trim = 0 0.65in 0 0.65in,clip]{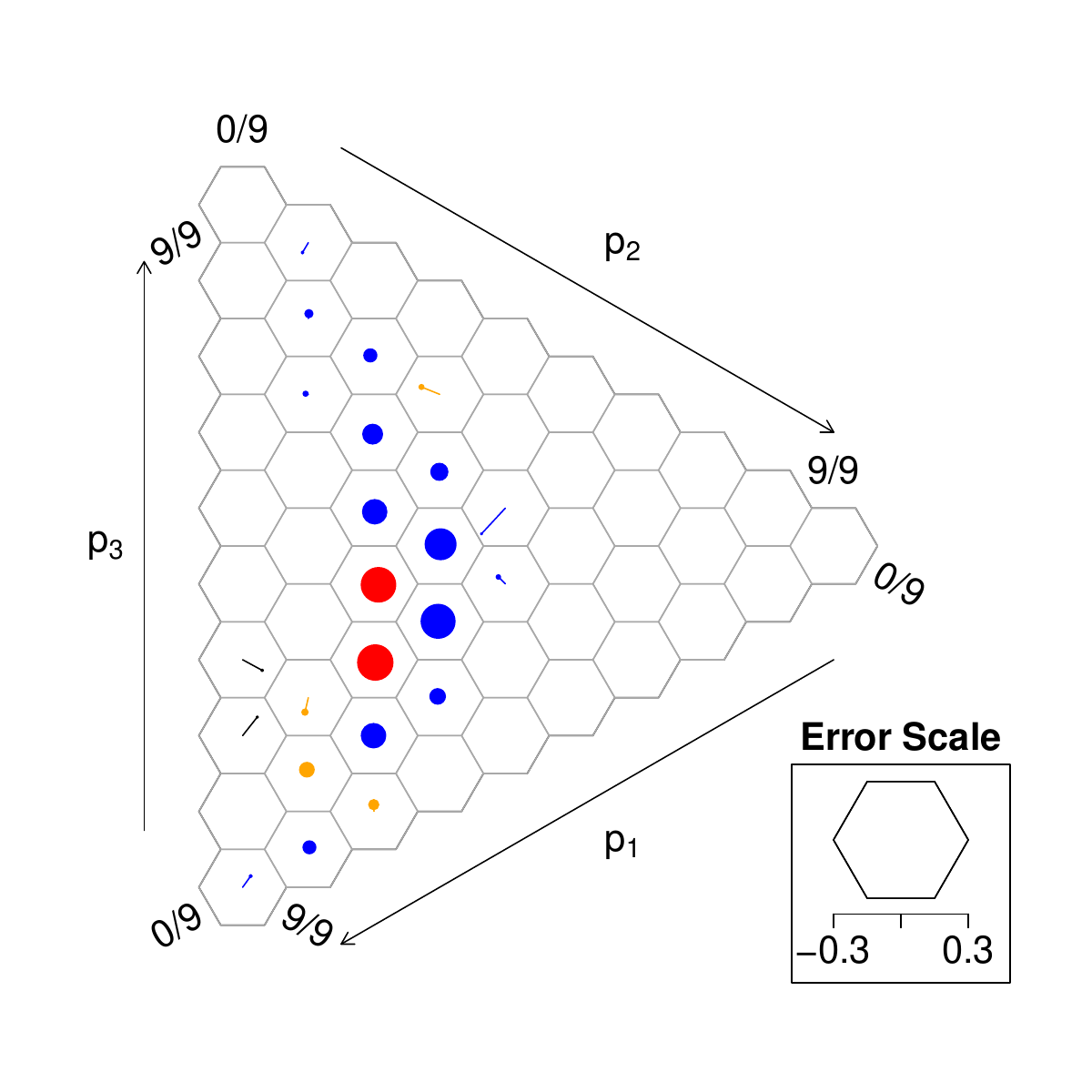}
	\caption{Calibration Simplex with color-coded $p$-values \add{from} the \add{log-likelihood ratio} statistic \add{evaluating a} total of \add{21,240} club soccer predictions by FiveThirtyEight (\SoccerPred) for matches from September 2016 until April 2019. Outcomes are encoded as $1 =$ ``home win'', $2 =$ ``draw'' and $3 =$ ``away win''. Only groups containing at least ten forecasts are shown.
	Blue indicates a $p$-value $p_{T^G}$ > 0.1, orange $0.1 > p_{T^G} \geq 0.01$, red $p_{T^G} < 0.01$ and black $p_{T^G} = 0$.
	}
	\label{Fig:CalSim}
\end{figure}

As illustrated in Figure \ref{Fig:CalSim}, the calibration simplex is a graphical tool \add{used} to conduct this comparison visually. The groups are determined by overlaying the probability simplex with a hexagonal grid. The circular dots correspond to nonempty groups of forecasts given by a hexagon. The dots' areas are proportional to the number of forecasts per group. A dot is shifted away from the center of the respective hexagon by a scaled version of the difference in average forecast probabilities and outcome frequencies. This provides valuable insight into the forecast's distribution and the conditional distribution of the quantity of interest. However, it is not apparent how big the differences may be merely by chance. 

If the forecast is calibrated, then, by (\ref{Eq:Cal}), the outcome frequencies $\bar{x}$ within a group of size $n$ with mean forecast $\bar{f}$ follow a generalized multinomial distribution (the multinomial analog of the Poisson binomial distribution), that is, a convolution of multinomial distributions $\mathcal{M}(1,f_i)$ with parameters $f_1,\dots,f_n \in \Deltam$. If these parameters only deviate little from their mean $\bar{f} = \frac1n\sum_i f_i$, then, presumably, the generalized multinomial distribution should not deviate much from a multinomial distribution with parameter $\bar{f}$. Under this presumption, multinomial tests can be applied to quantify the discrepancy within each group through a $p$-value. As the number of outcomes $m = 3$ is small, exact $p$-values are efficiently computed by Algorithm \ref{Algo:P-val} even for large sample sizes $n$. 

In Figure \ref{Fig:CalSim}, $p$-values obtained from the log-likelihood ratio statistic are conveyed through a coloring scheme. Note that a $p$-value will only ever be exactly zero, if an outcome is forecast to have zero probability and said outcome still realizes. Figure \ref{Fig:CalSim} was generated using the R package \texttt{CalSim} \citep{CalSim}.

The calibration simplex can be seen as a generalization of the popular reliability diagram. In light of this analogy, the use of multinomial tests to assess the statistical significance of differences in predicted probabilities and observed outcome frequencies serves the same purpose as consistency bars in reliability diagrams introduced by \citet{BS07}. Consistency bars are constructed using Monte Carlo simulation. To justify the above presumption, the multinomial $p$-values used to construct Figure \ref{Fig:CalSim} were compared to $p$-values \add{computed} from 10000 Monte Carlo samples obtained from the generalized multinomial distributions. To this end, the standard deviation of the Monte Carlo $p$-values was estimated using the estimated $p$-value in place of the true generalized multinomial $p$-value. Most of the multinomial $p$-values were quite close to the Monte Carlo estimates with an absolute difference less than two standard deviations, whereas two of them deviated on the order of 6 to 8 standard deviations from the Monte Carlo estimates, which nonetheless resulted in a relatively small absolute error. In particular, using the Monte Carlo estimated $p$-values did not change Figure \ref{Fig:CalSim}. As computation of the Monte Carlo estimates from the generalized multinomial distributions is computationally expensive, the multinomial $p$-values serve as a fast and adequate alternative. Further improving uncertainty quantification within the calibration simplex is a subject for future work.

\section{Concluding Remarks}\label{Sec:5}

A new method for \add{computing} exact $p$-values was investigated. It has been illustrated that the new method works well when the number $m$ of categories is small. This results in a concrete speedup in practical applications as illustrated through a simulation study. \add{As a further application not discussed in this work, the new method appears to be well suited to determine level set confidence regions discussed in \cite{CC09} and \cite{MTN21}.} \add{When $m$ is too large for exact methods to be feasible, other methods may be used to approximate exact $p$-values as hinted at in Appendix \ref{App:Comparison}. Such an approach may be added to the \texttt{ExactMultinom} package in a future version.}

Regarding the choice of test statistic, the ``exact multinomial test'' was treated as a test statistic and the asymptotic distribution of the resulting probability mass statistic was derived. Like most prominent test statistics, the probability mass statistic yields unbiased tests for the uniform null hypothesis. It was shown that a randomized test based on the probability mass statistic can be characterized in that it minimizes the respective (weighted) acceptance region.

Although asymptotic approximations work well in many use cases, there are cases, where these approximations are not adequate, for example, when dealing with small sample sizes or small expectations. On the other hand, there is nothing to be said against the use of exact tests whenever feasible, and it is recommended in the applied literature \citep[p.\ 83]{McD09} for samples of moderate size up to 1000. As the available implementations of exact multinomial tests in R use full enumeration, the new implementation increases the scope of exact multinomial tests for practitioners.

%
%
%
%
%
%

\appendix
\section{Difference Between Log-Likelihood Ratio and Probability Mass Statistic}\label{App:A}

\begin{Lemma}
	Let $\pi \in \Deltam$ with $\pi_j > 0$ for all $j = 1,\dots,m$ and $x \in \Deltamn$. Then
	$$T^\P(x,\pi) - T^{G}(x,\pi) = \sum_{j = 1}^m \left(\log(x_j) + 2 r(x_j) - \log(n\pi_j) - 2 r(n\pi_j)\right)$$ for a function $r$ on the positive real numbers for which $0 < r(x) < \frac{1}{12x}$ for $x > 0$. In case $x_j = 0$ for some $j = 1,\dots,m$, the above equality holds if $\log(0) + 2r(0)$ is understood to be 0.
	\label{Lemma:DiffLLRProb}
\end{Lemma}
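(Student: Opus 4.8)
The plan is to reduce the identity to the classical Stirling expansion of the log-Gamma function, with $r$ playing the role of the Stirling remainder; the bulk of the argument is algebraic bookkeeping, and the only genuine analytic input is a sharp two-sided bound on that remainder.

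First I would unfold both statistics into sums over the $m$ coordinates. Since $n! = \Gamma(n+1)$, the combinatorial prefactors cancel in the ratio defining $T^\P$, giving
$$\frac{f_{n,\pi}(x)}{\bar f_{n,\pi}(n\pi)} = \prod_{j=1}^m \pi_j^{\,x_j - n\pi_j}\,\frac{\Gamma(n\pi_j+1)}{\Gamma(x_j+1)},$$
so that $T^\P(x,\pi) = 2\sum_j\bigl(\log\Gamma(x_j+1) - \log\Gamma(n\pi_j+1) - (x_j-n\pi_j)\log\pi_j\bigr)$, while $T^{G}(x,\pi) = 2\sum_j\bigl(x_j\log x_j - x_j\log(n\pi_j)\bigr)$. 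Subtracting and regrouping the $\log\pi_j$ and $\log(n\pi_j)$ contributions via the identity $x_j\log(n\pi_j) - (x_j-n\pi_j)\log\pi_j = n\pi_j\log\pi_j + x_j\log n$ collapses the difference to
$$T^\P(x,\pi) - T^{G}(x,\pi) = 2\sum_{j=1}^m\Bigl(\log\Gamma(x_j+1) - x_j\log x_j - \log\Gamma(n\pi_j+1) + n\pi_j\log\pi_j + x_j\log n\Bigr).$$

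Next I would introduce $g(z) := \log\Gamma(z+1) - z\log z + z$ and invoke Stirling's formula in the Binet form $g(z) = \tfrac12\log(2\pi z) + r(z)$, valid for every $z>0$ with $0 < r(z) < \tfrac{1}{12z}$; this sharp remainder bound is the sole nontrivial ingredient, and I would cite a standard reference on the Gamma function rather than rederive it. Rewriting each coordinate of the last display through $g$ yields the per-coordinate value $g(x_j) - g(n\pi_j) + (x_j - n\pi_j)(\log n - 1)$, where the $z\log z$ pieces have been absorbed into $g$ and the cross-term $-n\pi_j\log(n\pi_j) + n\pi_j\log\pi_j$ has become $-n\pi_j\log n$. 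Substituting the Binet form, the constant $\tfrac12\log(2\pi)$ carried by $g(x_j)$ cancels against the one carried by $g(n\pi_j)$, leaving exactly $\tfrac12\log x_j + r(x_j) - \tfrac12\log(n\pi_j) - r(n\pi_j)$ per positive coordinate.

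It remains to dispose of the residual terms. The piece $\sum_j (x_j - n\pi_j)(\log n - 1)$ vanishes identically because $\sum_j x_j = n$ and $\sum_j n\pi_j = n\sum_j\pi_j = n$, so the marginal constraints are exactly what force the clean cancellation; doubling the surviving terms gives the claimed formula. The one delicate point, and the step I expect to require the most care, is the degenerate coordinate $x_j = 0$: there $\log\Gamma(1)=0$ while $g$ is accessible through the Binet form only for $z>0$, so the $j$-th summand must be interpreted through the stated convention $\log(0)+2r(0)=0$, after checking directly that such a coordinate contributes nothing to $T^{G}$ and only bounded $\pi_j$-terms to $T^\P$. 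As this concerns at most finitely many coordinates and only by a bounded amount, it is immaterial for the asymptotic application in Theorem \ref{Thm:AsApprox}, where each $X_j\to\infty$ in probability.
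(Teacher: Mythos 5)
For positive coordinates your argument is correct and is, at its core, the same proof as the paper's: both rest on Stirling's formula $\log\Gamma(z+1) = z\log z - z + \tfrac12\log(2\tilde\pi z) + r(z)$ with the Binet-type bound $0 < r(z) < \tfrac{1}{12z}$ (writing $\tilde\pi$ for Archimedes' constant, as the paper's appendix does), and both use $\sum_j x_j = \sum_j n\pi_j = n$ to make the linear terms vanish and the constants $\tfrac12\log(2\tilde\pi)$ cancel pairwise. The difference is purely organizational: the paper observes $T^\P - T^{G} = 2\bigl(\log\bar f_{n,\pi}(n\pi) - \log f_{n,x/n}(x)\bigr)$ and evaluates the single expression $\log\bar f_{n,y/n}(y)$ by Stirling at $y = n\pi$ and $y = x$, whereas you expand the two statistics coordinate-wise and regroup; your per-coordinate identity $g(x_j)-g(n\pi_j)+(x_j-n\pi_j)(\log n - 1)$ checks out.

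The genuine gap is the case $x_j = 0$, which you rightly single out as delicate but then resolve by assertion, and the assertion is false. Run the ``direct check'' you defer: with $g(z)=\log\Gamma(z+1)-z\log z+z$ one has $g(0)=0$, so after the linear terms cancel in the sum, a zero coordinate contributes $-2g(n\pi_j) = -\log(2\tilde\pi) - \log(n\pi_j) - 2r(n\pi_j)$ to $T^\P - T^{G}$, whereas under the stated convention $\log(0)+2r(0):=0$ the lemma's right-hand side contributes only $-\log(n\pi_j) - 2r(n\pi_j)$. The constant $\tfrac12\log(2\tilde\pi)$ carried by $g(n\pi_j)$ has no partner to cancel against when $x_j=0$, so the claimed equality is off by exactly $\log(2\tilde\pi)$ per zero coordinate. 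A concrete counterexample: for $m=2$, $n=1$, $\pi=(\tfrac12,\tfrac12)$, $x=(1,0)$ one gets $T^\P - T^{G} = 2\log(2/\tilde\pi) \approx -0.903$, while the convention-based sum evaluates to approximately $0.935$. (Your side claim that a zero coordinate adds only bounded $\pi_j$-terms to $T^\P$ is also wrong: its contribution is $2\bigl(n\pi_j\log\pi_j - \log\Gamma(n\pi_j+1)\bigr)$, which is of order $n\pi_j\log n$.) In fairness, this defect is inherited from the statement itself: the convention would need to read $\log(0)+2r(0) := -\log(2\tilde\pi)$, and the paper's own proof silently restricts to $y\in\R_{>0}^m$, never verifying the zero case either. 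Your closing observation --- that none of this matters for Theorem \ref{Thm:AsApprox}, since there one may work on the event that all $X_j>0$, whose probability tends to one --- is correct, but it salvages the application, not the lemma as stated.
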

\begin{proof}
	The logarithm of the Gamma function can be written as 
	$$\log\Gamma(x+1) = \log x\Gamma(x) = x\log(x) - x + \frac{1}{2} \log (2\tilde\pi x) + r(x)$$
	for a function $r$ on the positive real numbers for which $0 < r(x) < \frac{1}{12x}$ holds for all $x > 0$ \citep[see][6.1.41 and 6.1.42; here $\tilde\pi$ denotes Archimedes' constant]{AS72}. This yields
	\begin{align*}
	\log \bar{f}_{n,\frac yn}(y) &= \log \Gamma(n+1) + \sum_j \left(y_j\log \frac{y_j}{n} - \log\Gamma(y_j+1)\right) \\
	&= \log \Gamma(n+1) + \sum_j \left(y_j\log \frac{y_j}{n} - y_j\log(y_j) + y_j - \frac{1}{2}\log (2\tilde\pi y_j) - r(y_j)\right) \\
	&= \log \Gamma(n+1) + n(1-\log n) - \sum_j \left(\frac{1}{2}\log (2\tilde\pi y_j) + r(y_j)\right)
	\end{align*}
	for $y \in \R^m_{>0}$ such that $\sum_j y_j = n$, and hence
	\begin{align*}
	T^\P(x,\pi) - T^{G}(x,\pi)
	&= 2(\log \bar{f}_{n,\pi}(n\pi) - \log f_{n,\frac{x}{n}}(x)) \\
	&= 2 \sum_j \left(\frac{1}{2} \log \frac{x_j}{n\pi_j} + r(x_j) - r(n\pi_j)\right)
	\end{align*}	
\end{proof}

\section{Proof of Proposition \ref{Prop:QMC} b)}\label{App:B}

\begin{proof}
	Throughout the proof, let $x,y \in \Deltamn$ such that $x\neq y$, and define the index sets 
	$$S^+ := \{i\mid x_i > y_i\} \quad \text{ and } \quad S^- := \{j\mid x_j < y_j\}.$$
	
	Let $T = T^\lambda$ and w.l.o.g.\ $T(x) \geq T(y)$. First, consider the case $\lambda > 0$.
	Note that
	\begin{equation}
	T(x) - T(y) = \frac{2}{\lambda(\lambda+1)} \left(\sum_{i \in S^+} \frac{x_i^{\lambda +1} - y_i^{\lambda+1}}{(n\pi_i)^\lambda} - \sum_{j \in S^-} \frac{y_j^{\lambda +1} - x_j^{\lambda+1}}{(n\pi_j)^\lambda} \right) \geq 0
	\label{Eq:DiffPowDivFam}
	\end{equation}
	and		
	\begin{equation}
	\begin{split}
	T(x-e_{i^*}+e_{j^*}) 
	= T(x) &- \frac{2}{\lambda(\lambda + 1)}\left(\frac{x_{i^*}^{\lambda +1} - (x_{i^*} - 1)^{\lambda+1}}{(n\pi_{i^*})^\lambda}\right) \\
	&+ \frac{2}{\lambda(\lambda + 1)}\left(\frac{(x_{j^*}+1)^{\lambda +1} - x_{j^*}^{\lambda+1}}{(n\pi_{j^*})^\lambda}\right)
	\end{split}
	\label{Eq:OneStep}
	\end{equation}
	for $i^* \in S^+,j^*\in S^-$. If
	$$i^* = \argmax_{i \in S^+} \frac{x_i^{\lambda +1} - (x_i - 1)^{\lambda+1}}{(n\pi_i)^\lambda},\quad 
	j^* = \argmin_{j \in S^-} \frac{(x_j+1)^{\lambda +1} - x_j^{\lambda+1}}{(n\pi_j)^\lambda}$$
	and $d = d(x,y)$
	, then
	\begin{align*}
	\frac{x_{i^*}^{\lambda +1} - (x_{i^*} - 1)^{\lambda+1}}{(n\pi_{i^*})^\lambda}
	&= \frac{1}{d} \sum_{i \in S^+}\sum_{k = 1}^{x_i-y_i} \frac{x_{i^*}^{\lambda +1} - (x_{i^*} - 1)^{\lambda+1}}{(n\pi_{i^*})^\lambda} \\
	&\geq \frac{1}{d} \sum_{i \in S^+}\sum_{k = 1}^{x_i-y_i} \frac{x_i^{\lambda +1} - (x_i - 1)^{\lambda+1}}{(n\pi_i)^\lambda} \\
	&\geq \frac{1}{d} \sum_{i \in S^+}\sum_{k = 1}^{x_i-y_i} \frac{(x_i+1-k)^{\lambda +1} - (x_i - k)^{\lambda+1}}{(n\pi_i)^\lambda} \\
	&= \frac{1}{d} \sum_{i \in S^+} \frac{x_i^{\lambda +1} - y_i^{\lambda+1}}{(n\pi_i)^\lambda} \\ 
	&\overset{(\ref{Eq:DiffPowDivFam})}{\geq} \frac{1}{d} \sum_{j \in S^-} \frac{y_j^{\lambda +1} - x_j^{\lambda+1}}{(n\pi_j)^\lambda} \stepcounter{equation}\tag{\theequation}\label{Eq:MaxGeqMin} \\
	&= \frac{1}{d} \sum_{j \in S^-}\sum_{k = 1}^{y_j-x_j} \frac{(x_j+k)^{\lambda +1} - (x_j - 1 + k)^{\lambda+1}}{(n\pi_j)^\lambda} \\
	&\geq \frac{1}{d} \sum_{j \in S^-}\sum_{k = 1}^{y_j - x_j} \frac{(x_j+1)^{\lambda +1} - x_j^{\lambda+1}}{(n\pi_j)^\lambda} \\
	&\geq \frac{1}{d} \sum_{j \in S^-}\sum_{k = 1}^{y_j - x_j} \frac{(x_{j^*}+1)^{\lambda +1} - x_{j^*}^{\lambda+1}}{(n\pi_{j^*})^\lambda} \\
	&= \frac{(x_{j^*}+1)^{\lambda +1} - x_{j^*}^{\lambda+1}}{(n\pi_{j^*})^\lambda},
	\end{align*}
	Hence, $T(x) \geq T(x-e_{i^*}+e_{j^*})$ by equation (\ref{Eq:OneStep}).
	
	For $\lambda = 0$, simply taking the limit (as $\lambda \rightarrow 0$) in the above equations with
	\begin{align*}
	i^* &= \argmax_{i \in S^+}\quad 2x_i\log\left(\frac{x_i}{n\pi_i}\right) - 2(x_i-1)\log\left(\frac{x_i -1}{n\pi_i}\right), \\
	j^* &= \argmin_{j \in S^-}\quad 2(x_j +1)\log\left(\frac{x_j+1}{n\pi_j}\right)- 2x_j\log\left(\frac{x_j}{n\pi_j}\right)
	\end{align*}
	yields the desired inequality, since
	\begin{align*}
	&2x_{i^*}\log\left(\frac{x_{i^*}}{n\pi_{i^*}}\right) - 2(x_{i^*}-1)\log\left(\frac{x_{i^*} -1}{n\pi_{i^*}}\right) \\
	&= \lim_{\lambda \rightarrow 0} \frac{2}{\lambda(\lambda + 1)}x_{i^*}\left(\left(\frac{x_{i^*}}{n\pi_{i^*}}\right)^\lambda - 1 \right) \\ 
	&\phantom{=} - \lim_{\lambda \rightarrow 0} \frac{2}{\lambda(\lambda + 1)}(x_{i^*}-1)\left(\left(\frac{x_{i^*}-1}{n\pi_{i^*}}\right)^\lambda - 1\right) \\
	&= \lim_{\lambda \rightarrow 0}\frac{2}{\lambda(\lambda + 1)} \left(\frac{x_{i^*}^{\lambda +1} - (x_{i^*} - 1)^{\lambda+1}}{(n\pi_{i^*})^\lambda} - 1\right) \\
	&\overset{(\ref{Eq:MaxGeqMin})}{\geq} \lim_{\lambda \rightarrow 0}\frac{2}{\lambda(\lambda + 1)} \left(\frac{(x_{j^*} + 1)^{\lambda +1} - x_{j^*}^{\lambda+1}}{(n\pi_{j^*})^\lambda} - 1\right) \\
	&= 2(x_{j^*} +1)\log\left(\frac{x_{j^*}+1}{n\pi_{j^*}}\right)- 2x_{j^*}\log\left(\frac{x_{j^*}}{n\pi_{j^*}}\right).
	\end{align*}
\end{proof}

\section{Details for the Proof of Proposition \ref{Prop:GrowthAR}}\label{App:C}

The following two lemmas provide further details not contained in the proof of Proposition \ref{Prop:GrowthAR} itself. 

\begin{Lemma}
	Using notation as in the proof of Proposition \ref{Prop:GrowthAR}, $x \mapsto \bar{T}(x)$ is convex.
	\label{Lemma:Convex}
\end{Lemma}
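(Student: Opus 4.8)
The plan is to exploit the fact that each of the three extensions is \emph{separable}: it can be written as $\bar{T}(x) = \sum_{j=1}^m h_j(x_j) + \ell(x)$, where each $h_j \colon [0,\infty) \to \R$ is a one-dimensional function and $\ell$ is affine (or constant). A sum of convex functions of the individual coordinates is convex on $[0,\infty)^m$; adding the affine term $\ell$ preserves convexity; and restricting to the domain $\barDeltamn = [0,\infty)^m \cap \{x : \sum_j x_j = n\}$, which is a convex subset of its affine hull, again preserves convexity. Hence the whole statement reduces to three elementary one-dimensional convexity checks on the building blocks $h_j$.

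First, for $T = T^{\chi^2}$ I would use $\bar{T}(x) = \sum_j \frac{x_j^2}{n\pi_j} - n$, so that $h_j(t) = t^2/(n\pi_j)$ is a quadratic with positive leading coefficient (since $n\pi_j > 0$), hence convex, and $-n$ is constant. Second, for $T = T^{G}$ I would write $\bar{T}(x) = 2\sum_j x_j \log(x_j) - 2\sum_j x_j \log(n\pi_j)$, where the second sum is linear; the building block $h_j(t) = 2t\log t$ has second derivative $2/t > 0$ on $(0,\infty)$ and extends continuously and convexly to $t = 0$ with $h_j(0) = 0$, so it is convex on $[0,\infty)$.

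Third — and this is the only case needing more than an elementary derivative computation — for $T = T^\P$ I would use the representation already derived in the proof of the Corollary to \citet{CS75}, namely
\[
\bar{T}(x) = 2\sum_{j=1}^m \left(\log\Gamma(x_j+1) - x_j\log\pi_j - \log\frac{\Gamma(n\pi_j+1)}{\pi_j^{n\pi_j}}\right),
\]
in which the terms $-2x_j\log\pi_j$ are linear and the last term is constant. The essential building block is $h_j(t) = 2\log\Gamma(t+1)$, whose convexity on $[0,\infty)$ is precisely the logarithmic convexity of the Gamma function on the positive reals guaranteed by the Bohr--Mollerup characterization \citep[Theorem 2.4.2]{BW10}, the same fact already invoked for the Corollary. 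The main (mild) obstacle is therefore just to apply this log-convexity correctly and to confirm that the continuous extensions behave at the boundary faces where some $x_j = 0$: for $T^{G}$ this amounts to using the convention $0\log 0 = 0$, and for $T^\P$ it is immediate since $\log\Gamma(t+1)$ is finite and convex on all of $[0,\infty)$. Collecting the three cases completes the proof.
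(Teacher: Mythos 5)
Your proof is correct and follows essentially the same route as the paper's: in all three cases the paper also decomposes $\bar{T}$ into coordinate-wise convex building blocks plus affine/constant terms, using convexity of $t \mapsto t^2/(n\pi_j)$, of $t \mapsto t\log t$, and of $t \mapsto \log\Gamma(t+1)$ via the Bohr--Mollerup theorem. Your treatment is merely a little more explicit about the separable-plus-affine structure and the boundary behavior at $x_j = 0$, which the paper leaves implicit.
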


\begin{proof}
	The function $x\mapsto\bar{T}^{\chi^2}(x) = \sum_j \frac{x_j^2}{n\pi_j} - n$ is clearly convex as it is a sum of convex functions.
	
	The function $x\mapsto\bar{T}^{G}(x) = 2 \sum_j x_j\log(x_j) - x_j \log(n\pi_j)$ is convex\add{,} since $x \mapsto x\log(x)$ is convex (an elementary proof of this can be given using either the inequality of the arithmetic and geometric means or the second derivative).
	
	The function $x \mapsto\bar{T}^\P(x) = 2(\log(\bar{f}_{n,\pi}(n\pi)) - \log(\Gamma(n+1)) + \sum_j \log(\Gamma(x_j+1)) - \sum_j x_j \log(p_j))$ is convex as the Gamma function is logarithmically convex by the Bohr-Mollerup theorem \citep[Theorem 2.4.2]{BW10}.
\end{proof}

\begin{Lemma}
	Using notation as in the proof of Proposition \ref{Prop:GrowthAR}, the function \allowbreak $\partial B_{r_0}(\pi) \rightarrow \R, x_0\mapsto\bar{T}(x(n,x_0))$ converges uniformly to $\bar{T}^{\chi^2}(x(n,x_0))$ as $n\rightarrow\infty$ if $T = T^{G}$ or $T = T^\P$.
	\label{Lemma:UnifConv}
\end{Lemma}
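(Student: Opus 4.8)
The plan is to reduce everything to a Taylor expansion around $n\pi$, exploiting that the rescaling $x(n,x_0) = n\pi + \sqrt{nn_0}(x_0-\pi)$ keeps the relative deviations uniformly small. Writing $\epsilon_j := \frac{x_j(n,x_0) - n\pi_j}{n\pi_j} = \sqrt{\tfrac{n_0}{n}}\,\frac{x_{0,j}-\pi_j}{\pi_j}$, the defining constraint $\frac12\sum_j|x_{0,j}-\pi_j| = r_0 = \min_k\pi_k$ yields $|x_{0,j}-\pi_j| \le 2r_0 \le 2\pi_j$, hence $|\epsilon_j| \le 2\sqrt{n_0/n}$ for every $j$ and every $x_0 \in \partial B_{r_0}(\pi)$. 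Thus $\max_j|\epsilon_j|\to 0$ uniformly in $x_0$ as $n\to\infty$ (with $n_0$ fixed), and $x_j(n,x_0) = n\pi_j(1+\epsilon_j)\to\infty$ uniformly; in particular all coordinates are strictly positive for large $n$, so the extensions are evaluated in the smooth interior and no boundary conventions intervene.

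For $T = T^{G}$ I would write $\bar T^{G}(x) - \bar T^{\chi^2}(x) = 2\sum_j n\pi_j\bigl[(1+\epsilon_j)\log(1+\epsilon_j) - \epsilon_j - \tfrac12\epsilon_j^2\bigr]$, using $\sum_j n\pi_j\epsilon_j = \sum_j(x_j - n\pi_j) = 0$ to remove the linear term and $\sum_j n\pi_j\epsilon_j^2 = \bar T^{\chi^2}(x)$ to produce the quadratic one. A second-order expansion of $g(\epsilon) = (1+\epsilon)\log(1+\epsilon)$ with remainder gives $|g(\epsilon) - \epsilon - \tfrac12\epsilon^2| \le C|\epsilon|^3$ for $|\epsilon|\le\tfrac12$ with a universal $C$. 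Since $n\pi_j|\epsilon_j|^3 = n_0^{3/2}n^{-1/2}|x_{0,j}-\pi_j|^3/\pi_j^2 \le 8\,n_0^{3/2}\pi_j\,n^{-1/2}$ and $\sum_j\pi_j = 1$, the difference is bounded by $16C\,n_0^{3/2}n^{-1/2}$, uniformly in $x_0$, which tends to $0$.

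For $T = T^\P$ I would invoke the real-argument form of Lemma \ref{Lemma:DiffLLRProb}; its proof only uses the Stirling expansion $\log\Gamma(t+1) = t\log t - t + \tfrac12\log(2\tilde\pi t) + r(t)$, valid for all real $t>0$, so it applies verbatim to the non-integer arguments $x_j(n,x_0)$ and gives $\bar T^\P(x) - \bar T^{G}(x) = \sum_j\bigl(\log(1+\epsilon_j) + 2r(x_j) - 2r(n\pi_j)\bigr)$ with $0 < r(t) < \frac{1}{12t}$. Here $|\log(1+\epsilon_j)| \le 2|\epsilon_j| = O(\sqrt{n_0/n})$ uniformly, while $0 < r(x_j) < \frac{1}{12x_j}$ and $0 < r(n\pi_j) < \frac{1}{12n\pi_j}$ vanish uniformly because $x_j, n\pi_j\to\infty$. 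Summing over the finitely many $j$ shows $\bar T^\P - \bar T^{G}\to 0$ uniformly, and combining with the previous paragraph gives $\bar T^\P - \bar T^{\chi^2}\to 0$ uniformly.

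The main obstacle is purely quantitative: making every estimate uniform over $x_0 \in \partial B_{r_0}(\pi)$. This hinges on the single observation that choosing the radius $r_0 = \min_k\pi_k$ forces $|\epsilon_j| \le 2\sqrt{n_0/n}$ with a bound independent of $x_0$, so that the factor $n^{-1/2}$ appearing in $n\pi_j|\epsilon_j|^3$ and in the Stirling remainders $r(x_j)$ dominates the growth of the weights $n\pi_j$. Once this uniform smallness and the uniform positivity of the coordinates are secured, the Taylor and Stirling remainder bounds are routine.
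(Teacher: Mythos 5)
Your proposal is correct and follows essentially the same route as the paper's own proof: the same chain $\bar T^{\chi^2}\!\to\!\bar T^{G}\!\to\!\bar T^{\P}$, with the $G$ part handled by a second-order expansion of the logarithm around $n\pi$ (the linear term vanishing because $\sum_j (x_j(n,x_0)-n\pi_j)=0$ and the quadratic term producing $\bar T^{\chi^2}$), and the $\P$ part via the real-argument form of Lemma \ref{Lemma:DiffLLRProb} followed by the triangle inequality. The only difference is bookkeeping: you control the expansion with a Lagrange remainder bound $C|\epsilon_j|^3$, whereas the paper carries the full Taylor series and bounds its tail by the ratio test---an immaterial variation.
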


\begin{proof}
	Let $x_0 \in \partial B_{r_0}(\pi)$, and define $c = c(x_0) := \sqrt{n_0}(x_0 - \pi)$. Hence $\vert c_j \vert \leq \sqrt{n_0}r_0 < \sqrt{n_0}$ for all $j = 1,\dots,m$. Consider first the case $T = T^{G}$.
	Then (using the Taylor expansion $\log(1+x) = \sum_{k = 1}^\infty (-1)^{k+1}\frac{x^k}{k}$)
	\begin{align*}
	\bar{T}(x(n,x_0)) &= 2\sum_{j=1}^m x(n,x_0)_j\log\frac{x(n,x_0)_j}{n\pi_j} \\
	&= 2\sum_j(n\pi_j + \sqrt n c_j)\log\frac{n\pi_j + \sqrt nc_j}{n\pi_j} \\
	&= 2\sum_j(n\pi_j + \sqrt n c_j) \sum_{k = 1}^\infty\frac{(-1)^{k+1}}k\left(\frac{c_j}{\sqrt n\pi_j}\right)^k \\
	&= 2\sum_j \Bigg(\sqrt n c_j + \frac{c_j^2}{2\pi_j} - \frac{c_j^3}{2\sqrt n \pi_j^2} + \frac{n\pi_j + \sqrt n c_j}{\sqrt n^3}\sum_{k = 3}^\infty\frac{(-1)^{k+1}c_j^k}{k\sqrt n^{k-3}\pi_j^k}\Bigg)
	\end{align*}
	As $\sum_j c_j = 0$ and $2\sum_j \frac{c_j^2}{2\pi_j} = T^{\chi^2}(x(n,x_0))$, the inequalities
	\begin{align*}
	&\vert \bar{T}^{\chi^2}(x(n,x_0))-\bar{T}(x(n,x_0))\vert \\
	&< \sum_j\left(\frac{\vert c_j\vert^3}{2\sqrt n \pi_j^2} + \frac{n\pi_j + \sqrt n \vert c_j\vert}{\sqrt n^3}\sum_{k = 3}^\infty\frac{\vert c_j\vert^k}{k\sqrt n^{k-3}\pi_j^k}\right) \\
	&< \sum_j\left(\frac{\sqrt{n_0}^3}{2\sqrt n \pi_j^2} + \frac{n\pi_j + \sqrt n \sqrt{n_0}}{\sqrt n^3}\sum_{k = 3}^\infty\frac{\sqrt{n_0}^k}{k\sqrt n^{k-3}\pi_j^k}\right) \\
	&< \frac{1}{\sqrt n}\sum_j\left(\frac{\sqrt{n_0}^3}{2\pi_j^2} + (\pi_j + \sqrt{n_0})C(n)\right)
	\end{align*}
	hold, where the series converges to some $C(n)$ for sufficiently large $n$ by the ratio test and $C(n)$ decreases as $n$ increases. As this upper bound is independent of the choice of $x_0$ uniform convergence is ensured.
	
	Using Lemma \ref{Lemma:DiffLLRProb} in case $T = T^\P$, the inequality
	\begin{align*}
	&\vert \bar T^{G}(x(n,x_0))-\bar T(x(n,x_0))\vert \\
	&= \left\vert \sum_{j = 1}^m \left(\log \frac{x(n,x_0)_j}{n\pi_j} + 2 r(x(n,x_0)_j) - 2 r(n\pi_j)\right)\right\vert \\
	&= \left\vert\sum_j \left(\log\frac{n\pi_j + \sqrt{n}c_j}{n\pi_j} + 2r(n\pi_j+ \sqrt{n}c_j) - 2r(n\pi_j)\right)\right\vert \\
	&< \sum_j\left(\left\vert \log\left(1 - \frac{\sqrt{n_0}r_0}{\sqrt{n}\pi_j}\right)\right\vert + \frac{2}{12(n\pi_j - \sqrt{nn_0}r_0)}\right)\\
	\end{align*}
	holds and the upper bound converges to zero independent of the choice of $x_0$. Hence
	$$\bar{T}^{\chi^2}-\bar{T} = (\bar{T}^{\chi^2}-\bar{T}^{G}) + (\bar{T}^{G}-\bar{T})$$
	converges uniformly to zero as a function on $\partial B_{r_0}(\pi)$ in the sense of the lemma.
\end{proof}

\add{
\section{Comparison with Other Methods} \label{App:Comparison}

As hinted at in the introduction, other approaches for computing exact multinomial $p$-values exist. However, none of these methods have considered the probability mass statistic, but have focused on the log-likelihood ratio statistic \citep{Rah03,KN06} and other statistics from the family of power divergence statistics \citep{BOP92,Hir97,BFT04}. Adaptions of these methods to the probability mass statistic are beyond the scope of the present work.

\begin{table}[t]
	\caption{\add{Runtime and $p$-values obtained by different methods for the five pairs from Table \ref{Tab:RandValues}. Results from the full enumeration implemented by \texttt{xmulti} were included to show agreement of $p$-values produced by the exact methods. \emph{Branch \& Bound} refers to the implementation by \cite{Bej06} and \emph{Dynamic} refers to the dynamic programming approach by \cite{Rah03} as implemented by the author with lattice size $q$. Times are in milliseconds.}}
	\centering\small
	\begin{tabular}{cc|cc|cc|cc|cc}
		\multicolumn{2}{c|}{Algorithm \ref{Algo:P-val}} & \multicolumn{2}{c|}{Branch \& Bound} & \multicolumn{2}{c|}{\texttt{xmulti}} & \multicolumn{2}{c|}{Dynamic ($q = 10^3$)} & \multicolumn{2}{c}{Dynamic ($q = 10^4$)} \\
		$p_{T^{G}}$ & time & $p_{T^{G}}$ & time & $p_{T^{G}}$ & time & $p_{T^{G}}$ & time & $p_{T^{G}}$ & time \\
		\hline
		0.0126 & 1.6 & 0.0126 & 2.7 & 0.0126 & 29.8 & 0.0141 & 22.2 & 0.0135 & 240.2 \\
		0.0361 & 3.5 & 0.0361 & 6.7 & 0.0361 & 29.1 & 0.0339 & 22.0 & 0.0359 & 237.2 \\
		0.0719 & 1.6 & 0.0719 & 5.8 & 0.0719 & 28.9 & 0.0675 & 21.2 & 0.0721 & 224.4 \\
		0.0461 & 0.9 & 0.0461 & 2.3 & 0.0461 & 29.3 & 0.0758 & 22.2 & 0.0460 & 241.4 \\
		0.0628 & 1.7 & 0.0628 & 5.0 & 0.0628 & 29.2 & 0.0967 & 21.8 & 0.0625 & 235.5 \\
	\end{tabular}
	\label{Tab:RandValues2}
\end{table}

Most other methods are not ``strictly exact'' but compute the distribution of a discretized test statistic under the null hypothesis \citep{KN06}, thereby reducing the complexity of the resulting algorithms to polynomial time regardless of the number of \add{categories} $m$. While this seems to result in good approximations of very small $p$-values, which are of interest in some bioinformatics applications, the approximations are not exact and may differ quite strongly from the exact $p$-values of moderate size depending on the granularity of the discretization  (see Table \ref{Tab:RandValues2}). This seems to be amplified by the fact that test statistic values span quite a large range, but most of the probability mass is concentrated in a small part of this range. Of course, using finer discretizations improves these approximations, however, increasing the lattice size (i.e., the number of discretized values of the test statistic) increases the runtime (and memory usage) in practice. An instructive mathematical formulation of the idea as a dynamic programming problem is given by \cite{Rah03}, which was implemented by the author to obtain the results in Table \ref{Tab:RandValues2}. This approach has complexity of $\mathcal{O}(mqn^2)$, where the lattice size $q \in \N$ needs to grow linearly with $n$ to preserve the accuracy of the approximation.  The approach by \cite{KN06} reduces the complexity to $\mathcal{O}(mqn\log(n))$ (for the log-likelihood ratio statistic) by using a discrete Fourier transform to obtain the distribution of the discretized test statistic. Nonetheless, as these approaches allow to approximate exact $p$-values when $m$ is too large for exact algorithms to be feasible, such an approach may be added to the \texttt{ExactMultinom} package in a future version.

\begin{figure}[t]\centering
	\includegraphics[scale = 0.5]{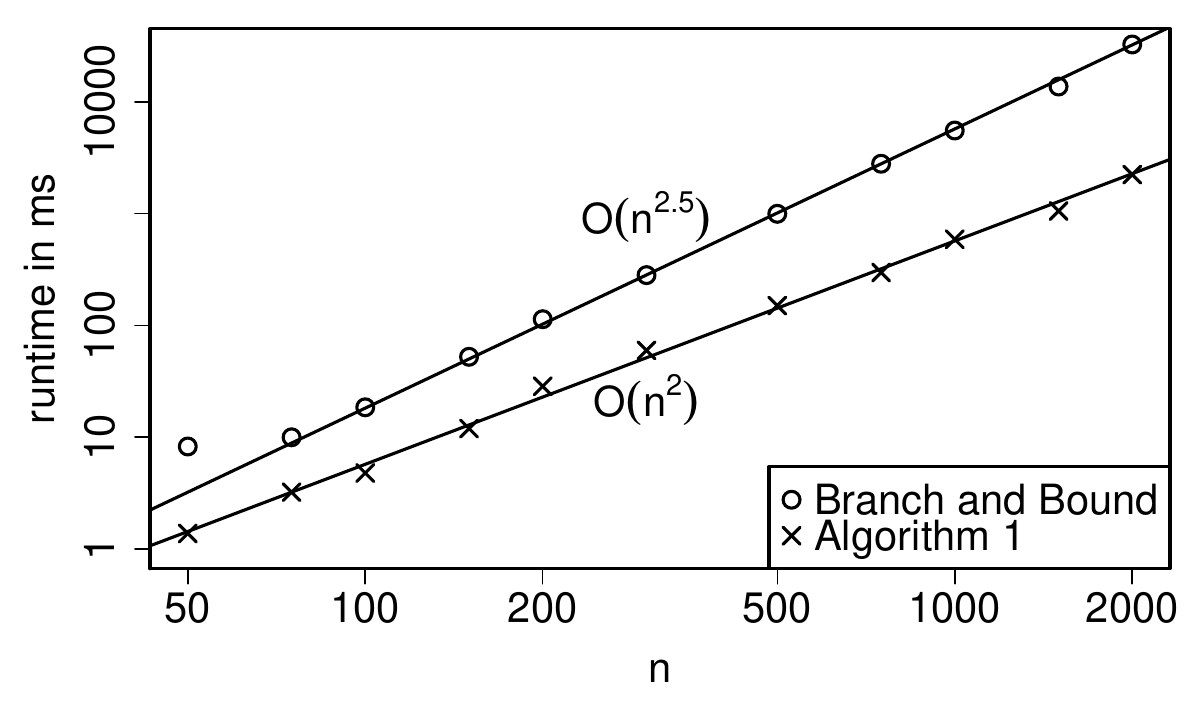}
	\caption{\add{Runtime of the branch and bound approach by \cite{BFT04} and the implementation of Algorithm \ref{Algo:P-val} for random samples with $p$-values of about 0.001 and null hypothesis $\pi = (\frac1{15},\frac2{15},\frac3{15},\frac4{15},\frac5{15})$.}}
	\label{Fig:Time5_App}
\end{figure} 

The only exact approach is the one by \cite{BFT04} implemented by \cite{Bej06}. \cite{BFT04} employ a ``branch and bound'' approach to speed up the computation of exact multinomial $p$-values. However, this approach also suffers from exponential runtime in $m$. The implementation by \cite{Bej06} computes exact $p$-values for the log-likelihood ratio statistic and can be adapted to any statistic in the family of power divergence statistics. Similar to Algorithm 1, the runtime of the branch and bound approach depends on the null hypothesis parameter $\pi$ and increases as the $p$-value decreases \citep[][Figure 5]{BFT04}. Figure \ref{Fig:Time5_App} shows runtime as a function of $n$ for $m = 5$ for random samples with $p$-values of about 0.001. Clearly, the implementation of Algorithm \ref{Algo:P-val} discussed in the main paper outperforms the implementation by \cite{Bej06} in this exemplary run, even though the former computes $p$-values for multiple test statistics at once. Figure \ref{Fig:Time5_App} suggests that the branch and bound approach may have complexity of $\mathcal{O}(n^{\frac{m}{2}})$ (in agreement with Figure 4 in \cite{BFT04}). Adapting the branch and bound approach to the probability mass statistic is left as a subject for future research.
}

\bibliographystyle{chicago}
\bibliography{manuscript_arxiv}

\end{document}